\renewcommand\subparagraph{\@startsection{subparagraph}{5}{0ex} {3.25ex \@plus1ex \@minus .2ex} {-1em} {\normalfont\normalsize\bfseries}}\makeatother
\newcommand{\opts}{opt_S}
\newcommand{\vd}{{\sf vd}}
\newcommand{\ed}{{\sf ed}}
\newcommand{\ea}{{\sf ea}}
\newcommand{\cdpe}{{\sc CDPE}}
\newcommand{\dpe}{{\sc DPE}}
\newcommand{\cdbe}{{\sc CDBE}}
\newcommand{\dbe}{{\sc DBE}}
\newcommand\displaycase[1]{{\bf #1}}
\renewcommand{\P}{{\sf P}}
\newcommand{\NP}{{\sf NP}}
\newcommand{\FPT}{{\sf FPT}}
\newcommand{\W}{{\sf W[1]}}
\title{Editing to Eulerian Graphs\thanks{The research 
leading to these results has received funding from the European Research Council under the European Union's Seventh Framework Programme (FP/2007-2013)/ERC Grant Agreement n. 267959 and from EPSRC Grant EP/K025090/1.
An extended abstract of this paper will appear in the proceedings of FSTTCS 2014.
}} 
\author{Konrad K. Dabrowski\inst{1} \and Petr A. Golovach\inst{2} \and Pim van 't Hof\inst{2} \and\\ Dani{\"e}l Paulusma\inst{1}}
\institute{
School of Engineering and  Computing Sciences, Durham University,\\ 
Science Laboratories, South Road, Durham DH1 3LE, United Kingdom\\
\email{\{konrad.dabrowski,daniel.paulusma\}@durham.ac.uk}
\and
Department of Informatics, University of Bergen,\\
PB 7803, 5020 Bergen, Norway\\ 
\email{\{petr.golovach,pim.vanthof\}@ii.uib.no}
}
\begin{document}

\maketitle

\begin{abstract}
We investigate the problem of modifying a graph into a connected graph in which
the degree of each vertex satisfies a prescribed parity constraint. Let $\ea$,
$\ed$ and $\vd$ denote the operations 
edge addition, edge deletion and vertex
deletion respectively. For any $S\subseteq \{\ea,\ed,\vd\}$, we define
\textsc{Connected Degree Parity Editing$(S)$} (\cdpe($S$)) to be the problem
that takes as input a graph~$G$, an integer $k$ and a function $\delta\colon
V(G)\rightarrow\{0,1\}$, and asks whether $G$ can be modified into a connected
graph $H$ with $d_{H}(v)\equiv\delta(v)~(\bmod~2)$ for each $v\in V(H)$, using
at most~$k$ operations from $S$. We prove that
\begin{itemize}
\item if $S=\{\ea\}$ or $S=\{\ea,\ed\}$, then \cdpe($S$) can be solved in polynomial time;
\item if $\{\vd\} \subseteq S\subseteq \{\ea,\ed,\vd\}$, then \cdpe($S$) is \NP-complete and \W-hard when parameterized by~$k$, even if $\delta\equiv 0$.
\end{itemize}
Together with known results by Cai and Yang and by Cygan, Marx, Pilipczuk,
Pilipczuk and Schlotter, our results completely classify the classical and
parameterized complexity of the \cdpe($S$) problem for all $S\subseteq
\{\ea,\ed,\vd\}$. We obtain the same classification for a natural variant of
the \cdpe($S$) problem on directed graphs, where the target is a weakly
connected digraph in which the difference between the in- and out-degree of
every vertex equals a prescribed value.
As an important implication of our results, we obtain polynomial-time
algorithms for the {\sc Eulerian Editing} problem and its directed variant. 
\end{abstract}

\begin{keywords}
Eulerian graphs, graph editing, polynomial algorithm
\end{keywords}

\section{Introduction}\label{sec:intro}
Graph modification problems play a central role in algorithmic graph theory,
partly due to the fact that they naturally arise in numerous practical
applications. A graph modification problem takes as input a graph $G$ and an
integer $k$, and asks whether $G$ can be modified into a graph
belonging to a prescribed graph class ${\cal H}$, using at most $k$ operations
of a certain type. The most common operations that are considered in this
context are edge additions (${\cal H}$-{\sc Completion}), edge deletions
(${\cal H}$-{\sc Edge Deletion}), vertex deletions (${\cal H}$-{\sc Vertex Deletion}),
and a combination of edge additions and edge deletions (${\cal H}$-{\sc
Editing}). The intensive study of graph modification problems has produced a
plethora of classical and parameterized complexity results (see
e.g.~\cite{BoeschST77,BurzynBD06,Ca96,CaiY11,CechlarovaS10,CrowstonGJY12,CyganMPPS14,DornMNW13,FroeseNN14,GoyalMPPS14,Golovach13,Golovach13a,HohnJM12,LesniakO86,LY80,MathiesonS12,MoserT09,NatanzonSS01}).

An undirected graph is Eulerian if it is connected and every vertex has even
degree,
while 
a directed graph is Eulerian if it is strongly
connected\footnote{Replacing ``strongly connected'' by ``weakly connected''
yields an equivalent definition of Eulerian digraphs, as it is well-known that
a balanced digraph is strongly connected if and only it is weakly connected
(see e.g.~\cite{CyganMPPS14}).} and balanced, i.e. the in-degree of every
vertex equals its out-degree. Eulerian graphs form a well-known graph class
both within algorithmic and structural graph theory. 
Several groups of authors have investigated the problem of deciding 
whether
a given undirected graph can be made Eulerian using a small number of operations. Boesch et
al.~\cite{BoeschST77} presented a polynomial-time algorithm for {\sc Eulerian
Completion}, and 
Cai and Yang~\cite{CaiY11} showed that the problems {\sc
Eulerian Vertex Deletion} and {\sc Eulerian Edge Deletion} are
\NP-complete~\cite{CaiY11}. When parameterized by~$k$, 
it is known that
{\sc Eulerian Vertex Deletion} is \W-hard~\cite{CaiY11}, while {\sc
Eulerian Edge Deletion} is fixed-parameter tractable~\cite{CyganMPPS14}. Cygan
et al.~\cite{CyganMPPS14} showed that the classical and parameterized
complexity results for {\sc Eulerian Vertex Deletion} and {\sc Eulerian Edge
Deletion} also hold for the directed variants of these problems.
Recently, Goyal et al.~\cite{GoyalMPPS14} improved the fixed-parameter tractability  results of Cygan et al.~\cite{CyganMPPS14} for the directed and undirected variants 
of {\sc Eulerian Edge
Deletion}. The same authors also proved that the {\sc Undirected Connected Odd Edge Deletion} problem, which asks whether it is possible to obtain a connected graph in which all vertices have odd degree by deleting at most $k$ edges, is fixed-parameter tractable when parameterized by $k$.

Another problem that can be seen as involving editing to an Eulerian multigraph
is the \textsc{Chinese Postman} problem, also known as the \textsc{Route
Inspection} problem~\cite{Kwan62}. In this problem a connected graph $G$,
together with an integer $k$, is given and the question is whether there exists
a closed walk in $G$ that contains all edges of $G$, but that has length at
most~$k$. In other words, can a total of at most $k$ copies of existing edges
be added to $G$ in order to modify $G$ into an Eulerian multigraph? Edmonds and
Johnson~\cite{EdmondsJ73} showed that both the undirected and directed variant
of this problem can be solved in polynomial time.

\subparagraph{Our Contribution}
We generalize, extend and complement known results on graph modification
problems dealing with Eulerian graphs and digraphs. 
The main contribution of
this paper consists of two non-trivial polynomial-time algorithms: one for
solving the {\sc Eulerian Editing} problem, and one for solving the directed
variant of this problem.
Given the aforementioned \NP-completeness result
for {\sc Eulerian Edge Deletion} and the fact that ${\cal H}$-{\sc Editing} is
\NP-complete for almost all natural graph classes ${\cal
H}$~\cite{BurzynBD06,NatanzonSS01}, we find it particularly interesting that
{\sc Eulerian Editing} turns out to be polynomial-time solvable. To the best of
our knowledge, 
the only other natural 
non-trivial graph class ${\cal H}$ for which ${\cal
H}$-{\sc Editing} is known to be polynomial-time solvable is the class of split
graphs~\cite{HammerS81}.

In fact, our polynomial-time algorithms are implications of two more general
results. In order to formally state these results, we need to introduce some
terminology. Let $\ea$, $\ed$ and $\vd$ denote the operations edge addition,
edge deletion and vertex deletion, respectively. For any set $S\subseteq
\{\ea,\ed,\vd\}$ and non-negative integer $k$, we say that a graph $G$ can be
{\em $(S,k)$-modified} into a graph $H$ if $H$ can be obtained from $G$ by
using at most $k$ operations from~$S$. We define the following problem for
every $S\subseteq \{\ea,\ed,\vd\}$:
\begin{center}
\begin{boxedminipage}{.99\textwidth}
\begin{tabular}{rl}
\textsc{\cdpe($S$):} & \textsc{Connected Degree Parity Editing$(S)$}\\
\textit{~~~~Instance:} & A graph $G$, an integer $k$\\
                       & and a function $\delta\colon V(G)\rightarrow\{0,1\}$.\\
\textit{Question:} & Can $G$ be $(S,k)$-modified into a connected graph $H$\\ 
                   & with $d_{H}(v)\equiv\delta(v)~(\bmod~2)$ for each $v\in V(H)$?
                   \end{tabular}
\end{boxedminipage}
\end{center}

Inspired by the work of Cygan et al.~\cite{CyganMPPS14} on directed Eulerian
graphs, we also study a natural directed variant of the \cdbe($S$) problem.
Denoting the in- and out-degree of a vertex $v$ in a digraph~$G$ by $d_G^-(v)$
and $d_G^+(v)$, respectively, we define the following problem for every
$S\subseteq\{\ea, \ed, \vd\}$:
\begin{center}
\begin{boxedminipage}{.99\textwidth}
\begin{tabular}{rl}
\textsc{\cdbe($S$):}   & \textsc{Connected Degree Balance Editing$(S)$}\\
\textit{~~~~Instance:} & A digraph $G$, an integer $k$ and\\
                       & a function $\delta\colon V(G)\rightarrow\mathbb{Z}$.\\
\textit{Question:}     & Can $G$ be $(S,k)$-modified into a weakly connected\\& digraph $H$
                        with $d_{H}^+(v)-d_{H}^-(v)=\delta(v)$ for each\\
& $v\in V(H)$?
\end{tabular}
\end{boxedminipage}
\end{center}

In Section~\ref{sec:undirected}, we prove that \cdpe($S$) can be solved in polynomial time when $S=\{\ea\}$  
and when $S=\{\ea,\ed\}$.
The first of these two results extends the result by Boesch et
al.~\cite{BoeschST77} on {\sc Eulerian Completion}
 and the second yields the first
polynomial-time algorithm for {\sc Eulerian Editing}, as these problems are
equivalent to \cdpe($\{\ea\}$) and \cdpe($\{\ea,\ed\})$, respectively, when we
set $\delta\equiv 0$. 
The complexity of the problem drastically changes when vertex deletion is allowed: we prove that for every subset $S\subseteq
\{\ea,\ed,\vd\}$ with $\vd\in S$, the \cdpe($S$) problem is
\NP-complete and \W-hard with parameter~$k$, even when $\delta\equiv 0$. 
This complements results by Cai and Yang~\cite{CaiY11} stating that \cdpe($S$)
is \NP-complete and \W-hard with parameter~$k$ when $S=\{\vd\}$
and $\delta\equiv 0$ or $\delta\equiv 1$. 
Our results, together with the aforementioned results due to Cygan et al.~\cite{CyganMPPS14}\footnote{The \FPT-results
by Cygan et al.~\cite{CyganMPPS14} only cover \cdpe($\{\ed\}$) and
\cdbe($\{\ed\}$) when $\delta\equiv 0$, but it can easily be seen that their
results carry over to \cdpe($\{\ed\}$) and \cdbe($\{\ed\}$) for any function
$\delta$.} and Cai and
Yang~\cite{CaiY11}, yield a complete classification of both the classical and the parameterized
complexity of \cdpe($S$) for all $S\subseteq \{\ea,\ed,\vd\}$; 
see the middle column of Table~\ref{t-thetable}. 

In Section~\ref{sec:directed}, we use different and more involved arguments to classify the classical and parameterized complexity of the \cdbe($S$) problem for all $S\subseteq \{\ea,\ed,\vd\}$. 
Interestingly, the classification we obtain for \cdbe($S$) turns out to be identical to the one we obtained for \cdpe($S$).  
In particular, our proof of the fact that \cdbe($S$) is polynomial-time solvable when $S=\{\ea\}$ and $S=\{\ea,\ed\}$ implies that the directed variants of {\sc Eulerian Completion} and {\sc Eulerian Editing} are not significantly harder than their undirected counterparts.
All results on \cdbe($S$) are summarized in the right column of Table~\ref{t-thetable}. 

\begin{table}[htb]
\begin{center}
\begin{tabular}{l|l|l}
$S$                    & $\;\;$ \cdpe($S$)                             & $\;\;$ \cdbe($S$)                           \\ \hline
$\ea,\ed$           & $\;\;$ \P                                             & $\;\;$ \P                                            \\
$\ea$                  & $\;\;$ \P                                             & $\;\;$  \P                                             \\
$\ed$                  & $\;\;$ \FPT~\cite{CyganMPPS14} &$\;\;$  \FPT~\cite{CyganMPPS14} \\
$\vd$                   & $\;\;$ \W-hard~\cite{CaiY11}$\;$               & $\;\;$ \W-hard~\cite{CyganMPPS14}\\
$\ea,\vd$            & $\;\;$ \W-hard                               & $\;\;$ \W-hard\\
$\ed,\vd$            & $\;\;$  \W-hard                              &$\;\;$  \W-hard \\
$\ea,\ed,\vd\;$     & $\;\;$ \W-hard                             & $\;\;$ \W-hard \\
\end{tabular}
\end{center}
\caption{A summary of the results for \cdpe($S$) and \cdbe($S$). All results are new except those for which a reference is given.
The number of allowed operations $k$ is the parameter in the parameterized results, and if a parameterized result is stated, then the corresponding problem is
\NP-complete.} 
\label{t-thetable}
\end{table}

We would like to emphasize that there are no obvious hardness reductions between the different problem variants. The parameter $k$ in the problem definitions represents the budget for all operations in total; adding a new operation to $S$ may completely change the problem, as there is no way of forbidding its use. Hence, our polynomial-time algorithms for \cdpe($\{\ea,\ed\}$) and \cdbe($\{\ea,\ed\}$) do not generalize the polynomial-time algorithms for \cdpe($\{\ea\}$) and \cdbe($\{\ea\}$), and as such require significantly different arguments. In particular, our main result, stating that {\sc Eulerian Editing} is polynomial-time solvable, is not a generalization of the fact that {\sc Eulerian Completion} is polynomial-time solvable and stands in no relation to the \FPT-result by Cygan et al.~\cite{CyganMPPS14} for {\sc Eulerian Edge Deletion}.

\medskip
\noindent
We end this section by mentioning two similar graph modification frameworks in the literature that formed a direct motivation for the framework defined in this paper.
Mathieson and Szeider~\cite{MathiesonS12} considered the {\sc Degree Constraint Editing($S$)} problem, which is that of testing whether a graph $G$ can be $(S,k)$-modified into a graph $H$ in which the degree of every vertex belongs to some list associated with that vertex;
recently some new results for this problem were obtained by Froese et al.~\cite{FroeseNN14} and Golovach~\cite{Golovach13a}.
Golovach~\cite{Golovach13} performed a similar study to that of Mathieson and Szeider~\cite{MathiesonS12}, but with the additional condition that the resulting graph must be connected.

\section{Preliminaries}\label{sec:prelim}

We consider finite graphs $G=(V,E)$ that may be undirected or directed; in the
latter case we will always call them digraphs. All our undirected graphs will
be without loops or multiple edges; in particular, this is the case for both the input and the output graph in every undirected problem we consider. 
Similarly, for every directed problem that we consider, we do not allow the input or output digraph to contain multiple arcs.
In our proofs we will also make use of {\em directed multigraphs}, which are digraphs that are permitted to have multiple arcs.

We denote an edge between two vertices $u$ and $v$ in a graph by $uv$.  We denote an arc between
two vertices $u$ and~$v$ by $(u,v)$, where $u$ is the \emph{tail} of $(u,v)$
and $v$ is the \emph{head}. The disjoint union of two graphs $G_1$ and $G_2$ is denoted $G_1+G_2$.
The complete graph on $n$ vertices is denoted $K_n$ and the complete bipartite graph with classes of size $s$ and $t$
is denoted~$K_{s,t}$.

Let $G=(V,E)$ be a graph or a digraph.  Throughout the paper we assume that
$n=|V|$ and $m=|E|$.  For $U\subseteq V$, we let $G[U]$  be the graph (digraph)
with vertex set $U$ and an edge (arc) between two vertices $u$ and~$v$ if and
only if this is the case in $G$; we say that $G[U]$ is \emph{induced by}~$U$.
We write $G-U=G[V\setminus U]$.  For $E'\subseteq E$, we let $G(E')$ be the
graph (digraph) with edge (arc) set~$E'$ whose vertex set consists of the
end-vertices of the edges in $E'$; we say that $G(E')$ is \emph{edge-induced
by}~$E'$.  Let $S$ be a set of (ordered) pairs of vertices of $G$.  We let
$G-S$ be the graph (digraph) obtained by deleting all edges (arcs) of $S\cap E$
from~$G$, and we let $G+S$ be the graph (digraph) obtained by adding all edges
(arcs) of $S\setminus E$ to $G$.  We may write $G-e$ or $G+e$ if $S=\{e\}$.

Let $G=(V,E)$ be a graph.  A {\em component} of $G$ is a maximal connected subgraph
of~$G$.  The {\em complement}  of $G$  is the graph $\overline{G}=(V,\overline{E})$
with vertex set~$V$ and an edge between two distinct vertices $u$ and $v$ if and only
if~$uv\notin E$.   For a vertex $v\in V$, we
let $N_G(v)=\{u \mid uv\in E\}$ denote its \emph{(open) neighbourhood}.  The
\emph{degree} of  $v$ is denoted $d_G(v)=|N_G(v)|$. The graph $G$  is {\em even}
if all its vertices have even degree, and it is {\em Eulerian} if it is even
and connected.  We say that a set $D\subseteq E$ is an {\em edge cut} in $G$
if~$G$ is connected but $G-D$ is not.  An edge cut of size~$1$ is called a {\em
bridge} in~$G$.

A \emph{matching} of a graph $G$ is a set of edges, in which no two edges have
a common end-vertex; it is called a {\em maximum} matching if its number of
edges is maximum over all matchings of $G$.  We need the following lemma due to
Micali and Vazirani.

\begin{lemma}[\cite{MicaliV80}]\label{l-mic}
A maximum matching of an $n$-vertex graph can be found in $O(n^{5/2})$ 
time.
\end{lemma}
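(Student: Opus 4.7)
The plan is to prove the lemma by following the Micali--Vazirani algorithm, which is built on the phase-based augmenting-path paradigm of Hopcroft and Karp. First I would recall Berge's theorem: a matching $M$ is maximum if and only if $G$ contains no $M$-augmenting path. This reduces the problem to starting with $M=\emptyset$ and repeatedly enlarging $M$ by augmentation until no augmenting path remains.

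Next I would formalise the phase structure. In each phase, instead of augmenting along a single path, one finds a \emph{maximal} set of vertex-disjoint shortest $M$-augmenting paths and augments $M$ along all of them simultaneously. The combinatorial heart of the analysis is then to show that $O(\sqrt{n})$ such phases suffice. The argument is the standard one: if $\ell$ is the current length of a shortest augmenting path, the paths found in successive phases have strictly increasing lengths, so after $\sqrt{n}$ phases the shortest augmenting path has length at least $\sqrt{n}$; if the current matching still differs from a maximum matching by $r$ edges, the symmetric difference contains $r$ vertex-disjoint augmenting paths, each of length at least $\sqrt{n}$, forcing $r\sqrt{n}\le n$ and thus $r\le\sqrt{n}$. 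So at most $O(\sqrt{n})$ further phases can occur, giving $O(\sqrt{n})$ phases in total.

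The third and main step is to implement a single phase in $O(n^2)$ time, which together with $O(\sqrt{n})$ phases yields the claimed $O(n^{5/2})$ bound (in fact $O(m\sqrt{n})$, which is sharper for sparse graphs, but $m=O(n^2)$ suffices here). For bipartite graphs, a phase reduces cleanly to a layered BFS followed by a DFS that peels off disjoint shortest augmenting paths. For general graphs one must cope with odd alternating cycles (``blossoms''). The Micali--Vazirani approach is to assign each vertex both an \emph{even level} and an \emph{odd level} in a carefully defined layered digraph of tenacities, and to run a ``double depth-first search'' that simultaneously explores two paths from a source; whenever the two frontiers meet, a blossom has been discovered and is recorded via a union-find-like ``base'' structure so that subsequent searches treat contracted blossoms in $O(1)$ amortised time. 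One then shows that the resulting set of disjoint augmenting paths is maximal among shortest ones, and that the whole phase runs in $O(m)$ time.

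The main obstacle is the correctness of the blossom handling, since in non-bipartite graphs the interaction of shortest augmenting paths with blossoms is subtle: the same vertex can be reachable by alternating paths of different parities, and a naive contraction of blossoms does not preserve shortest-path distances. Establishing that the even/odd level assignment and the double DFS together discover a \emph{maximal} disjoint family of shortest augmenting paths, and do so in linear time per phase, is the core technical content of the Micali--Vazirani paper (with the detailed correctness proof later completed by Vazirani). Rather than reproducing this machinery, I would invoke their result directly and combine it with the $O(\sqrt{n})$ phase bound above to conclude the $O(n^{5/2})$ running time.
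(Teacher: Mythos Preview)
Your proposal is a reasonable high-level sketch of the Micali--Vazirani algorithm and its analysis, and there is no mathematical gap in what you outline. However, there is nothing to compare against: the paper does not prove this lemma at all. It is stated purely as a citation of~\cite{MicaliV80} and used as a black box (only once, in the running-time analysis at the end of Theorem~\ref{thm:edit-undir}). So the ``paper's own proof'' is simply the reference, and your write-up already does more than the paper does by sketching the phase bound and the blossom-handling idea before ultimately invoking the same reference.
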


Let $G=(V,E)$ be a digraph.  If $(u,v)$ is an arc, then $(v,u)$ is the {\em
reverse} of this arc.  For a subset $F\subseteq E$, we let $F^R=\{(u,v) | (v,u)
\in F\}$ denote the set of arcs whose reverse is in~$F$.  The \emph{underlying}
graph of~$G$ is the undirected graph with vertex set $V$ where two vertices
$u,v\in V$ are adjacent if and only if $(u,v)$ or $(v,u)$ is an arc in $G$.  We
say that $G$ is \emph{(weakly) connected} if its underlying graph is connected.
A {\em component} of~$G$ is a connected component of its underlying graph.  An
arc $a\in E$ is a \emph{bridge} in $G$ if it is a bridge in the underlying
graph of $G$.  A vertex $u$ is an \emph{in-neighbour} or \emph{out-neighbour}
of a vertex~$v$ if $(u,v)\in E$ or $(v,u)\in E$, respectively. Let
$N_G^-(v)=\{u\mid (u,v)\in E\}$ and $N_G^+(v)=\{u\mid (v,u)\in E\}$, where we
call $d_G^-(v)=|N_G^-(v)|$ and $d_G^+(v)=|N_G^+(v)|$ the \emph{in-degree} and
\emph{out-degree} of $v$, respectively. 
A vertex $v\in V$ is {\em balanced} if $d_G^+(v)=d_G^-(v)$, 
or equivalently, its {\em degree balance} $d_G^+(v)-d_G^-(v)=0$.
Recall that~$G$ is {\em Eulerian} if it
is connected and {\em balanced}, that is, the out-degree of every vertex is
equal to its in-degree. 

Let $G=(V,E)$ be a graph and let $T\subseteq V$. A subset $J \subseteq E$ is a
\emph{$T$-join} if the set of odd-degree vertices in $G(J)$ is precisely $T$.
If $G$ is connected and $|T|$ is even then $G$ has at least one $T$-join. In
Section~\ref{sec:undirected} we need to find a {\em minimum} $T$-join, that is,
one of minimum size. We use the following result of Edmonds and Johnson~\cite{EdmondsJ73} to do
so.

\begin{lemma}[\cite{EdmondsJ73}]\label{lem:t-join}
Let $G=(V,E)$ be  a graph, and let $T\subseteq V$.  Then a minimum $T$-join (if
one exists) can be found in $O(n^3)$ time. 
\end{lemma}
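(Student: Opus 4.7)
The plan is to reduce the minimum $T$-join problem to a minimum-weight perfect matching problem on an auxiliary complete graph on vertex set $T$, with edge weights given by shortest-path distances in $G$. This classical reduction is due to Edmonds and Johnson.

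First I would check existence: a $T$-join exists if and only if every connected component of $G$ contains an even number of vertices of $T$, which a single depth-first search tests in $O(n+m)$ time. Assuming the condition holds, I would next compute, for every $u\in T$, the breadth-first shortest-path distance $d(u,v)$ to each other vertex $v$; running BFS from each vertex of $T$ costs $O(|T|(n+m))=O(n^3)$. I would then form the complete graph $K_T$ on vertex set $T$ with edge weight $w(uv)=d(u,v)$ and compute a minimum-weight perfect matching $M$ in $K_T$ by Edmonds' weighted blossom algorithm in $O(|T|^3)=O(n^3)$ time. Finally, for each edge $uv\in M$, I fix a shortest $u$-$v$ path $P_{uv}$ in $G$ and return $J=\triangle_{uv\in M}\,E(P_{uv})$, the symmetric difference of the edge sets of these paths.

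The main obstacle is verifying that this $J$ is indeed a minimum $T$-join. For the $T$-join property, a vertex has odd degree in $G(J)$ precisely when it is an endpoint of an odd number of the paths $P_{uv}$, which by perfectness of $M$ happens exactly on the vertices of $T$; hence $J$ is a $T$-join of size at most $w(M)$. For minimality, any $T$-join $J'$ admits a decomposition into edge-disjoint trails whose endpoints pair up the elements of $T$, together with some edge-disjoint closed trails (this is the standard parity-based trail decomposition of $G(J')$). The induced pairing defines a perfect matching $M'$ in $K_T$, and since each trail has length at least the shortest-path distance between its endpoints, $|J'|\ge w(M')\ge w(M)\ge |J|$. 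Hence $J$ is minimum, and the total running time is dominated by the $O(n^3)$ bounds for the all-pairs BFS from $T$ and for the weighted matching computation.
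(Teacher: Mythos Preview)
Your argument is correct and is precisely the classical Edmonds--Johnson reduction of the minimum $T$-join problem to a minimum-weight perfect matching on $K_T$ with shortest-path weights; the parity argument for the symmetric difference and the trail-decomposition lower bound are both sound, and the running-time accounting is fine since $m=O(n^2)$ in simple graphs. Note that the paper does not supply a proof of this lemma at all: it merely cites~\cite{EdmondsJ73} as the source, so there is no ``paper's own proof'' to compare against---your write-up simply fills in what the cited reference establishes.
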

Lemma~\ref{lem:t-join} was used by Cygan et al.~\cite{CyganMPPS14} to solve
${\cal H}$-{\sc Edge Deletion} in polynomial time when~${\cal H}$ is the class
of even graphs. It would  immediately yield a polynomial-time algorithm for
\cdpe($\{\ed\}$) if we dropped the connectivity condition.

We need a variant of Lemma~\ref{lem:t-join} for digraphs in
Section~\ref{sec:directed}.  Let $G=(V,E)$ be a directed multigraph and let $f:
T\rightarrow \mathbb{Z}$ be a function for some $T\subseteq V$.  A multiset
$E'\subseteq E$ with $T\subseteq V(G(E'))$ is a {\em directed $f$-join} in $G$
if the following two conditions hold: 
$d_{G(E')}^+(v)-d_{G(E')}^-(v)=f(v)$ for
every $v \in T$ and $d_{G(E')}^+(v)-d_{G(E')}^-(v)=0$ for every $v\in
V(G(E'))\setminus T$.  A directed $f$-join is {\em minimum} if it has minimum
size.  The next lemma was used by Cygan et al.~\cite{CyganMPPS14} to solve
${\cal H}$-{\sc Edge Deletion} in polynomial time when ${\cal H}$ is the class
of balanced digraphs; it would also yield a polynomial-time algorithm for
\cdbe($\{\ed\}$) if we dropped the connectivity condition. 

\begin{lemma}[\cite{CyganMPPS14}]\label{lem:dir-t-join}
Let $G=(V,E)$ be a directed multigraph and $f: T \rightarrow \mathbb{Z}$ be  a
function for some $T\subseteq V$.  A minimum directed $f$-join $F$ (if one
exists) can be found in $O(nm\log n \log \log m)$ time. Moreover,~$F$ consists
of mutually arc-disjoint directed paths from vertices~$u$ with $f(u)>0$ to
vertices~$v$ with $f(v)<0$.
\end{lemma}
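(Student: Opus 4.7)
The plan is to reduce the computation of a minimum directed $f$-join to a single minimum-cost integer flow computation on an auxiliary network, and then invoke flow decomposition to obtain the ``moreover'' part.

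First I would check a necessary feasibility condition. Since every arc contributes $+1$ to the out-degree of its tail and $+1$ to the in-degree of its head, every sub-multiset $E'\subseteq E$ satisfies $\sum_{v\in V(G(E'))}(d_{G(E')}^+(v)-d_{G(E')}^-(v))=0$. Hence a directed $f$-join can exist only if $\sum_{v\in T}f(v)=0$; otherwise we return infeasible. We may also assume $|f(v)|\le m$ for each $v\in T$, since the degree balance at $v$ in any sub-multiset of $E$ is bounded by $\max(d_G^+(v),d_G^-(v))\le m$.

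Next I would build a flow network $N$ on $V\cup\{s,t\}$, where $s$ and $t$ are a new super-source and super-sink. Keep every arc of $E$ (treating parallel copies as distinct) with capacity $1$ and cost $1$; for each $u\in T$ with $f(u)>0$ add an arc $(s,u)$ of capacity $f(u)$ and cost $0$; and for each $v\in T$ with $f(v)<0$ add an arc $(v,t)$ of capacity $-f(v)$ and cost $0$. Let $D=\sum_{u\in T:\,f(u)>0}f(u)$. An integer $s$-$t$ flow of value $D$ in $N$ must saturate every arc leaving $s$ and every arc entering $t$; restricting such a flow to the arcs of $G$ yields a sub-multiset $E'\subseteq E$ whose degree balance at every $v\in T$ equals $f(v)$ and at every other vertex in $V(G(E'))$ equals $0$. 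Conversely, every directed $f$-join in $G$ lifts to an integer $s$-$t$ flow of value $D$ in $N$. Since the cost of such a flow equals $|E'|$, a minimum-cost integer $s$-$t$ flow of value $D$ in $N$ produces a minimum directed $f$-join in $G$.

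For the structural statement, I would apply the flow decomposition theorem to such an optimal flow: it decomposes into arc-disjoint $s$-$t$ paths together with arc-disjoint directed circulations on the arcs of $G$. Any circulation has strictly positive cost and can be discarded without changing the degree balances, so at optimality no circulation survives. Each remaining $s$-$t$ path, after removing its initial $s$-arc and final $t$-arc, is a directed path in $G$ from a vertex $u$ with $f(u)>0$ to a vertex $v$ with $f(v)<0$; these paths are mutually arc-disjoint because every arc of $G$ has unit capacity in $N$. The only remaining issue is matching the quoted running time: the network $N$ has $O(n)$ vertices, $O(m)$ arcs, and integer capacities bounded by $m$, so a sufficiently efficient minimum-cost integer flow algorithm, as employed by Cygan et al.~\cite{CyganMPPS14}, achieves the claimed $O(nm\log n\log\log m)$ bound. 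The main obstacle is simply quoting an appropriately fast min-cost flow routine; the combinatorial content is entirely in the reduction and the decomposition.
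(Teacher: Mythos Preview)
The paper does not prove this lemma; it is imported from \cite{CyganMPPS14} and used as a black box, so there is no in-paper argument to compare against. Your reduction to a minimum-cost integer $s$--$t$ flow followed by flow decomposition is the standard route (and is essentially what Cygan et al.\ do), and the correspondence you set up between saturating flows of value $D$ and directed $f$-joins is correct, as is the observation that optimal flows contain no circulations, yielding the path decomposition.

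One small technical point: the paper's definition of a directed $f$-join requires $T\subseteq V(G(E'))$, i.e.\ every vertex of $T$ must actually be touched by~$E'$. Your flow construction forces this for every $v\in T$ with $f(v)\neq 0$, since such a vertex must carry nonzero flow on $G$-arcs, but it does not force it for a hypothetical $v\in T$ with $f(v)=0$. In the paper's application $T$ is by construction the set of vertices with $f(v)\neq 0$, so this never arises; still, to match the lemma as stated you should either note that one may assume $f(v)\neq 0$ for all $v\in T$ without loss of generality, or handle that degenerate case separately.
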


\section{Connected Degree Parity Editing}\label{sec:undirected}

Let $S\subseteq \{\ea,\ed,\vd\}$.  
In Section~\ref{s-polyun} we will show that
\cdpe($S$) is polynomial-time solvable if $S=\{\ea\}$ or $S=\{\ea,\ed\}$ and
in Section~\ref{s-wun} we will show 
that it is \NP-complete and \W-hard with parameter~$k$ if $\vd\in S$.

\subsection{The Polynomial-Time Solvable Cases}\label{s-polyun}
First, let $\{\ea\}\subseteq S \subseteq \{\ea,\ed\}$.  Let $(G,\delta,k)$ be an
instance of \cdpe($S$) with $G=(V,E)$.  Let~$A$ be a set of edges not in $G$,
and let $D$ be a set of edges in~$G$, with $D=\emptyset$ if $S=\{\ea\}$.  We
say that $(A,D)$ is a {\em solution} for $(G,\delta,k)$ if its {\em size}
$|A|+|D|\leq k$, the congruence $d_H(u)\equiv\delta(u)~(\bmod~2)$ holds for every vertex~$u$ and
the graph $H=G+A-D$ is connected; if $H$ is
not connected then $(A,D)$ is a {\em semi-solution} for $(G,\delta,k)$. If
$S=\{\ea\}$ we may denote the solution by $A$ rather than $(A,D)$ (since
$D=\emptyset$).  We consider the optimization version for \cdpe$(S)$.  The
input is a pair $(G,\delta)$, and we aim to find the minimum $k$ such that
$(G,\delta,k)$ has a solution (if one exists).  We call such a solution {\em
optimal} and denote its size by $\opts(G,\delta)$.  We say that a
(semi)-solution for $(G,\delta,k)$ is also a (semi)-solution for $(G,\delta)$.
If $(G,\delta,k)$ has no solution for any value of $k$, then $(G,\delta)$ is a
{\em no-instance} of  \cdpe$(S)$ and $\opts(G,\delta)=+\infty$.

Let $T=\{v\in V\;|\;d_G(v)\not\equiv \delta(v)~(\bmod~2)\}$.  Define
$G_S=K_n$ if $S=\{\ea,\ed\}$ and $G_S=\overline{G}$ if $S=\{\ea\}$.  Note that
if $S=\{\ea\}$ then $G_S$ contains no edges of $G$, so in this case any
$T$-join in $G_S$ can only contain edges in $E(\overline{G})$. The following
key lemma is an easy observation.

\begin{sloppypar}
\begin{lemma}\label{lem:struct-undir}
Let $\{\ea\}  \subseteq S \subseteq \{\ea,\ed\}$. Let $(G,\delta)$ be an instance of
\cdpe$(S)$ and~$A \subseteq E(\overline{G})$, $D \subseteq E(G)$.  Then
$(A,D)$ is a semi-solution of \cdpe$(S)$ if and only if $A \cup D$ is a
$T$-join in~$G_S$.
\end{lemma}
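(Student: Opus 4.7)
The plan is to unpack both sides of the equivalence and verify that they impose the same parity condition on every vertex, via a direct bookkeeping argument on how the operations in $A$ and $D$ affect degrees modulo $2$.

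First I would record the basic containment and disjointness facts about $A \cup D$. Since $A$ consists of non-edges of $G$ and $D$ consists of edges of $G$, the sets $A$ and $D$ are disjoint, and $|A \cup D| = |A| + |D|$. Moreover, $A \cup D \subseteq E(G_S)$: if $S = \{\ea,\ed\}$ then $G_S = K_n$, which contains every pair of distinct vertices; if $S = \{\ea\}$ then $D = \emptyset$ and $A \subseteq E(\overline{G}) = E(G_S)$. So $A \cup D$ is at least a candidate for being a $T$-join in $G_S$, and it remains to characterise when its odd-degree vertex set equals $T$.

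Next I would track the parity of degrees through the modification. Let $H = G + A - D$. For every vertex $v \in V$, each edge of $A$ incident to $v$ contributes $+1$ to $d_G(v)$ and each edge of $D$ incident to $v$ contributes $-1$; since $\pm 1 \equiv 1 \pmod 2$, we obtain
\[
d_H(v) \;\equiv\; d_G(v) + d_{G_S(A \cup D)}(v) \pmod{2},
\]
with the convention that vertices not appearing in $G_S(A \cup D)$ have degree $0$ there. Consequently, $d_H(v) \equiv \delta(v) \pmod 2$ for every $v \in V$ if and only if $d_{G_S(A \cup D)}(v)$ is odd precisely when $d_G(v) \not\equiv \delta(v) \pmod 2$, i.e. precisely when $v \in T$. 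By definition, this is exactly the statement that $A \cup D$ is a $T$-join in $G_S$.

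Finally I would observe that a semi-solution for $(G,\delta)$ is, by definition, an $(A,D)$ that realises the parity requirement on $H = G + A - D$ without any connectivity constraint, so the two conditions in the biconditional really are one and the same. The argument is essentially a parity count, and the only slightly delicate point is to confirm that both cases $S = \{\ea\}$ and $S = \{\ea,\ed\}$ are covered uniformly by the definition of $G_S$; the choice $G_S = \overline{G}$ in the first case is exactly what guarantees $A \cup D \subseteq E(G_S)$ when deletions are forbidden.
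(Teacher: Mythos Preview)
Your proof is correct and is precisely the straightforward parity bookkeeping that the paper has in mind; the paper itself does not prove the lemma, merely calling it ``an easy observation,'' so your argument is a faithful unpacking of that observation. One tiny remark: when $S=\{\ea\}$ you assert $D=\emptyset$, which is not literally a hypothesis of the lemma as stated, but it is the standing convention set up in the paragraph preceding the lemma (and in any case, if $D\neq\emptyset$ with $S=\{\ea\}$ then both sides of the biconditional fail, so the equivalence survives).
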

\end{sloppypar}

We extend the result of Boesch et al.~\cite{BoeschST77} for $\delta\equiv 0$ to arbitrary $\delta$. Our proof
is based around similar ideas but we also had to do some further analysis. The
main difference in the two proofs is the following. If $\delta \equiv 0$ then
none of the added edges in a solution will be a bridge in the modified graph
(as the number of vertices of odd degree in a graph is always even). However
this is no longer true for arbitrary $\delta$ and extra arguments are needed.

\begin{theorem}\label{thm:add-undi}
Let $S=\{\ea\}$. Then \cdpe$(S)$ can be solved in $O(n^3)$ time.
\end{theorem}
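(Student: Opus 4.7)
The plan is to reduce the problem to computing a minimum $T$-join in $\overline{G}$ via Lemma~\ref{lem:t-join}, followed by a structural post-processing step to ensure connectivity. I set $T=\{v\in V(G) : d_G(v)\not\equiv\delta(v)~(\bmod~2)\}$, so that by Lemma~\ref{lem:struct-undir} a set $A\subseteq E(\overline{G})$ is a solution for $(G,\delta,k)$ if and only if $A$ is a $T$-join in $\overline{G}$ of size at most $k$ with $G+A$ connected. A $T$-join exists only when $|T|$ is even, so if $|T|$ is odd the instance is a no-instance. Otherwise I compute a minimum $T$-join $M$ in $\overline{G}$ using Lemma~\ref{lem:t-join} in $O(n^3)$ time, yielding the lower bound $\opts(G,\delta)\geq |M|=|T|/2$. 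If $G+M$ is already connected, $M$ is optimal and we output it.

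The main work lies in the case where $G+M$ is disconnected. I would analyse the components $C_1,\dots,C_c$ of $G$ via the values $t_i=|T\cap V(C_i)|$. A component with $t_i=0$ must be joined to the rest by an even number of $A$-edges, so it contributes at least $2$ whenever it needs to be connected. A component with $t_i\geq 2$ is more flexible, since a within-component matching edge of $M$ can be rerouted through a neighbouring component, merging two components at the cost of at most one extra edge. I iteratively merge components of $G+M$ using these two primitives (``two-edge bridging'' for $t_i=0$ components and ``rerouting'' for $t_i\geq 2$ components), each step chosen to keep $A$ a $T$-join in $\overline{G}$ while reducing the component count. Matching lower bounds coming from $|T|/2$, the number of components with $t_i=0$, and the total component count then establish optimality.

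The main obstacle is proving optimality for general $\delta$. As noted in the excerpt, when $\delta\equiv 0$ no added edge can be a bridge of $G+A$: otherwise one side of the bridge would contain an odd number of odd-degree vertices, contradicting handshaking. This essentially forces the component structure of any optimal solution to agree with that of $G+M$. For general $\delta$ this argument breaks down and an optimal solution may legitimately use bridges, so the set of components it induces need not match that of $G+M$. I would handle this by an exchange argument: starting from an arbitrary optimal solution $A^*$, replace its parity-fixing portion with $M$ and rearrange the remaining edges into the cheapest connectivity pattern described above, arguing that the total size never increases. All operations run in $O(n^3)$ time, dominated by the single call to Lemma~\ref{lem:t-join}.
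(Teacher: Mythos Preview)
Your high-level approach matches the paper's: compute a minimum $T$-join in $\overline{G}$ and then modify it to ensure connectivity. However, there are two genuine gaps.

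First, you assert $|M|=|T|/2$, i.e.\ that a minimum $T$-join in $\overline{G}$ is a perfect matching on $T$. This is false in general: whenever two vertices of $T$ are adjacent in $G$ they are non-adjacent in $\overline{G}$ and must be joined through an intermediate vertex. The paper shows that a minimum $T$-join $F$ in $\overline{G}$ is a forest of stars (paths of length~$1$ or~$2$), so one only has $|T|/2\le|F|\le|T|$. Your ``rerouting'' primitive, phrased in terms of a within-component \emph{matching} edge of $M$, therefore does not apply as stated, and your lower bound $|T|/2$ is not the right one. The paper's remedy is to keep $|F|$ itself as a lower bound and to prove, for $q\geq 1$, the exact formula $\opts(G,\delta)=\max\{|F|,\,p+q-1,\,p+\tfrac{1}{2}|T|\}$; all three terms are genuinely needed, and the constructive upper bound is obtained by iteratively swapping pairs of $F$-edges (not single matching edges) until either every $F$-edge is a bridge or all $F$-edges lie in one component, after which a single path insertion finishes.

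Second, you do not handle the case $q=0$ (equivalently $T=\emptyset$). Here $M=\emptyset$ and one must add an even subgraph of $\overline{G}$ that connects the components of $G$. Your ``two-edge bridging'' can simply fail: if $G=K_1+K_t$ then $\overline{G}=K_{1,t}$ has no cycle, hence no non-empty even subgraph, and the instance is a no-instance even though $|T|$ is even. The paper treats $q=0$ by a separate case analysis: no-instance for $G=K_1+K_t$; $\opts=4$ for $G=K_s+K_t$ with $s,t\geq 2$; $\opts=3$ if $p=2$ and some component is not complete; and $\opts=p$ for $p\geq 3$. Your exchange argument sketch does not cover these boundary cases.
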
 

\begin{proof}
Let $S=\{\ea\}$ and let $(G,\delta)$ be an instance of \cdpe$(S)$.  We first
use Lemma~\ref{lem:t-join} to check in $O(n^3)$ time whether $G_S$ has a
$T$-join.  If not then $(G,\delta)$ has no semi-solution by
Lemma~\ref{lem:struct-undir}, and thus no solution either.  We may therefore
assume that~$|T|$ is even and~$F$ is a minimum $T$-join in~$G_S$. (Recall that
Lemma~\ref{lem:t-join} states that we can find $F$ in $O(n^3)$ time if it
exists.) We also assume that either~$T \neq \emptyset$ or $G$ is not connected,
otherwise the trivial solution $A=\emptyset$ is clearly optimal.  Let~$p$ be
the number of components of~$G$ that do not contain any vertex of~$T$ and
let~$q$ be the number of components of~$G$ that contain at least one vertex
of~$T$. We will prove the following series of statements.

\begin{itemize}
\item $(G,\delta)$ is a no-instance if $p=2,q=0$ and $G=K_1+K_t$ for $t\geq 1$.
\item $\opts(G,\delta)=4$ if $p=2,q=0$ and $G=K_s+K_t$ for $s,t\geq 2$.
\item $\opts(G,\delta)=3$ if $p=2,q=0$ and~$G$ has a component that is not complete.
\item $\opts(G,\delta)=p$ if $p\geq 3,q=0$.
\item $\opts(G,\delta)=\max\{|F|,p+q-1,p+\frac{1}{2}|T|\}$ if $q>0$.
\end{itemize}

We split our proof into two parts depending on the value of~$q$.

\medskip
\noindent
\displaycase{Case 1:} $q=0$.\\
In this case $T=\emptyset$, so by Lemma~\ref{lem:struct-undir} for any
semi-solution~$A$, every vertex in $G_S(A)$ must have even degree in $G_S(A)$.
In other words, every vertex of~$G$ must be incident to an even number of edges
in~$A$. Since $T=\emptyset$, we assumed above that~$G$ was disconnected, so $p\geq 2$
and any solution~$A$ must be non-empty. This means that $G_S(A)$ must contain a
cycle, so $\opts(G,\delta)\geq 3$. Recall that $G_S(A)$ is
a subgraph of $\overline{G}$.

Suppose $p=2$. If $G=K_1+K_t$ for $t\geq 2$ then $\overline{G}=K_{1,t}$, which
does not contain a cycle. Therefore $(G,\delta)$ is a no-instance in this case.
If $G=K_s+K_t$ for $s,t\geq 2$ then $\overline{G}=K_{s,t}$, which contains no
cycles of length 3. Therefore $\opts(G,\delta)\geq 4$ in this case.  Indeed, if
$u,v$ are vertices in the~$K_s$ component of~$G$ and $u',v'$ are vertices in the~$K_t$
component, then $A=\{uu',u'v,vv',v'u\}$ is a solution of size 4 and this
solution must therefore be optimal.  Finally, suppose~$G$ contains exactly two
components, at least one of which is not a clique. Let~$x,y$ be non-adjacent
vertices in this component and let~$z$ be a vertex in the other component. Then
$A=\{xy,yz,zx\}$ is a solution of size~3, which must therefore be optimal.

Finally, suppose that $p \geq 3$. Since $G+A$ must be connected for any
solution~$A$, every component in~$G$ must contain at least one vertex
incident to an edge of~$A$. By Lemma~\ref{lem:struct-undir}, this vertex must
be incident to an even number of edges of~$A$, meaning that it must be incident
to at least two such edges. Therefore $\opts(G,\delta) \geq p$.  Indeed, if we
choose vertices $v_1,\ldots,v_p$, one from each component of~$G$ then
$A=\{v_1v_2,v_2v_3,\ldots,v_{p-1}v_p,v_pv_1\}$ is a solution of size~$p$, which
is therefore optimal.

This concludes the $q=0$ case.

\medskip
\noindent
\displaycase{Case 2:} $q>0$.\\
In this case $T\neq\emptyset$. We first show that $\opts(G,\delta)\geq
\max\{|F|,p+q-\nobreak 1,\allowbreak p+\frac{1}{2}|T|\}$. Since~$F$ is a minimum $T$-join in~$G_S$,
Lemma~\ref{lem:struct-undir} implies that $\opts(G,\delta)\geq |F|$. Since $G$
has $p+q$ components, any solution $A$ must contain at least $p+q-1$ edges to
ensure that $G+A$ is connected, so $\opts(G,\delta)\geq p+q-1$. Finally, let
$G_1,\ldots,G_p$ be the components of~$G$ that do not contain any vertices
of~$T$. If $A$ is a solution then every component~$G_i$ must contain a vertex
incident to some edge in~$A$.  By Lemma~\ref{lem:struct-undir}, this vertex
must be incident to an even number of edges of $A$, meaning that it must be
incident to at least two such edges.  By Lemma~\ref{lem:struct-undir}, every
vertex of~$T$ must be incident to some edge in~$A$. Therefore~$A$ must contain
at least $p+\frac{1}{2}|T|$ edges, so $\opts(G,\delta)\geq p+\frac{1}{2}|T|$.

Next we show that we can always construct a solution of size
$\max\{|F|,\allowbreak p+q-1,p+\frac{1}{2}|T|\}$. To do this, we try to replace edges of~$F$
in such way that~$F$ remains a minimum $T$-join in~$G_S$, but the number of
components in $G+F$ is reduced. After we have finished this process, if $G+F$
is connected then setting $A=F$ gives a solution of size~$|F|$, which is
therefore optimal. Otherwise, we will be able to use the structure of~$F$ to
construct a solution of size either $p+q-1$ or $p+\frac{1}{2}|T|$.

Consider the graph $G_S(F)$. Since~$F$ is a minimum $T$-join, $G_S(F)$ cannot
contain any cycles (otherwise the edges in the cycle could be removed from~$F$
to give a smaller $T$-join).  We claim that~$G_S(F)$ does not contain a path of
length $\geq 3$.  Suppose, for contradiction, that there is such a path with
edge set~$P$ and end-vertices~$u$ and~$v$.  Note that~$u$ and~$v$ are in the
same component of $G+F$. Since $G+F$ is not connected (otherwise $A=F$ would be
an optimal solution of size $|F|$), there must be a vertex $x \in V(G)$ which
is in a different component of $G+F$ from the one containing~$u$ and~$v$. In this
case $ux,xv \in E(G_S)$.  Let~$F'=F\setminus P \cup \{ux,xv\}$. Then~$F'$ is
also a $T$-join in $G_S$, since the degree parity of any vertex in $G+F'$ is
the same as its degree parity in $G+F$. However, $|F'|<|F|$, which contradicts
the fact that~$F$ is a minimum $T$-join.  Therefore $G_S(F)$ must be a forest
that contains no paths of length 3. In other words $G_S(F)$ is a forest of
stars.

Now suppose that $uv,u'v'\in F$, such that~$uv$ is not a bridge in $G+F$ and
the vertices~$u$ and~$u'$ are in different components of $G+F$.
Let~$F'=F\setminus\{uv,u'v'\}\cup\{u'v,uv'\}$. Then~$F'$ is also a minimum
$T$-join in~$G_S$. However, $G+F'$ has one component less than $G+F$.  Indeed,
since~$uv$ is not a bridge in $G+F$, the vertices $u,u',v,v'$ must all be in
the same component of $G+F'$. Therefore, if such edges $uv,u'v'\in F$ exist, we
replace~$F$ by~$F'$. We do this exhaustively until no further such pairs of
edges exist. At this point either every edge in~$F$ must be a bridge or every
edge in~$F$ is in the same component of $G+F$. We consider these possibilities
separately.

First suppose that every edge in~$F$ is a bridge. Choose $uv \in F$ and let
$G_1,\ldots,G_k$ be the components of $G+F$, with $u,v \in V(G_1)$. Note that
since every edge in~$F$ is a bridge, $k=p+q-|F|$. Now let $v_i \in V(G_i)$ for
$i \in \{2,\ldots,k\}$. Let $A=F$ if $k=1$ and $A=F \setminus \{uv\} \cup
\{uv_2,v_2v_3,\ldots,v_{k-1}v_k,v_kv\}$ otherwise. Now every vertex in $G+A$
has the same degree parity as in $G+F$, so~$A$ is a $T$-join in~$G_S$.  The
graph $G+A$ is connected, so~$A$ is a solution.  However,
$|A|=|F|-1+p+q-|F|=p+q-1$.  Therefore~$A$ is an optimal solution.

We may now assume that every edge in~$F$ is in the same component of $G+F$. If
$G+F$ is connected, then $A=F$ is a solution of size $|F|$ and is therefore
optimal, so we may assume that $G+F$ is not connected. Suppose $uv,vw \in F$. Then
$uw \in E(G)$, otherwise we could replace~$uv,vw$ in $F$ by $uw$ to get a
smaller $T$-join in $G_S$. Suppose that $uv,vw$ do not form a cut-set in $G+F$.
In other words, we suppose that $u$ and $v$ are in the same component of
$G+F\setminus\{uv,vw\}$. Let~$x$ be a vertex in a different component of $G+F$
from the one containing $u,v,w$. Then $ux,xw \in E(G_S)$.  Let
$F'=F\setminus\{uv,vw\}\cup\{ux,xw\}$. Then $F'$ must also be a minimum
$T$-join in~$G_S$. However, $G+F'$ has one less component than $G+F$.
Indeed,~$x$ is in the same component of $G+F'$ as $u,v,w$. In this case we may
replace~$F$ by~$F'$. Again, we apply this replacement exhaustively until it can
no longer be applied. This process ends when either $G+F$ becomes connected (in
which case $A=F$ is an optimal solution of size $|F|$) or, for every pair of
edges of the form $uv,vw \in F$, we find that $\{uv,vw\}$ is a cut-set in
$G+F$. We may assume the latter is the case.

Now suppose $uv,vw \in F$. Consider the component $C$ of
$G+F\setminus\{uv,vw\}$ containing $v$. We claim that $C$ contains no vertices
of $T$. Suppose, for contradiction, that $x \in T \cap C$ ($x$ is not
necessarily distinct from $v$). Then by Lemma~\ref{lem:struct-undir},~$x$ must
be the end-vertex of some edge in $F\setminus\{uv,vw\}$, say $xy$ (again $y$ is
not necessarily distinct from $v$). Note that $x$ and~$y$ are in the same
component of $G+F\setminus\{uv,vw\}$, which is different from the component
containing $u$ and $w$. Let $F'=F\setminus\{xy,uv,vw\}\cup\{ux,yw\}$, then~$F'$
is also a $T$-join in $G_S$, but $|F'|=|F|-1$, contradicting the minimality of
$F$. Therefore $C$ must be one of the $p$ components of $G$ that contain no
vertices of~$T$.

Now $G_S(F)$ contains $\frac{1}{2}|T|$ paths and $|F|$ edges, so we can
decompose $G_S(F)$ into $|T|-|F|$ paths of length 1 and $|F|-\frac{1}{2}|T|$
paths of length~2. We can do this in such a way that the ends of each path lie
in $T$.  Also, by the arguments above, the middle vertex of every path of length~2
lies in a different one of those~$p$ components of~$G$ that do not contain any
vertices of $T$.  Let $G_0,G_1,\ldots,G_k$ be the components of $G+F$ such that
$G_0$ is the only component containing vertices of $T$. Note that
$k=p-(|F|-\frac{1}{2}|T|)$.  Let $v_i \in V(G_i)$ for $i \in \{1,\ldots,k\}$.
Choose $uv \in F$ and let $A=F
\setminus\{uv\}\cup\{uv_1,v_1v_2,\ldots,v_{k-1}v_k,v_kv\}$. Then every vertex
in $G+A$ has the same degree parity as in $G+F$ and the graph $G+A$ is
connected, so $A$ is a solution. Furthermore, $|A|= |F|+p-(|F|-\frac{1}{2}|T|)=
p+\frac{1}{2}|T|$, so $A$ is an optimal solution.  This concludes the proof of
Case 2.

\medskip
\noindent
Recall that a minimum $T$-join in~$G_S$ can be found in $O(n^3)$ time by
Lemma~\ref{lem:t-join}, so the value of $\opts(G,\delta)$ can be computed in
$O(n^3)$ time. Note that the constructive proofs for Cases~1 and 2 can be
turned into
$O(nm)$ time algorithms, so an optimal solution $A$ can also be found in
$O(n^3)$ time.
\end{proof}

We are now ready to present the main result of this section. 
Proving this result requires significantly different arguments than the ones used in the proof of Theorem~\ref{thm:add-undi}.
Let $S=\{\ea,\ed\}$ and let $(G,\delta)$ be an instance of \cdpe($S$). If~$F$
is a $T$-join in $G_S=K_n$, let $D=F \cap E(G)$ and $A=F \setminus D$. Then by
Lemma~\ref{lem:struct-undir}, $(A,D)$ is a semi-solution.  Note that if~$F$ is
a minimum $T$-join in~$G_S$ then it is a matching in which every vertex of~$T$
is incident to precisely one edge of~$F$, so $|F|=\frac{1}{2}|T|$. We will show
how this allows us to calculate $\opts(G,\delta)$ directly from the structure
of~$G$, without having to find a $T$-join. We will also show that there are
only trivial no-instances for this problem,
namely when~$|T|$ is odd or~$G$ contains only two vertices.

\begin{theorem}\label{thm:edit-undir}
Let $S=\{\ea,\ed\}$. Then \cdpe$(S)$ can be solved in $O(n+m)$ time and an
optimal solution (if one exists) can be found in $O(n^3)$ time.
\end{theorem}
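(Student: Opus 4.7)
The plan is to bypass the general $T$-join computation and instead compute $\opts(G,\delta)$ directly from four structural parameters. Let $T$ and $t = |T|/2$ be as before, let $c$ be the number of components of $G$, let $q$ be the number of components that meet $T$, and put $p = c - q$. After detecting the trivial no-instances in $O(n+m)$ time ($|T|$ odd; and $n = 2$ with $\delta \equiv 0$, handled by direct inspection), I will establish the closed form
\[
\opts(G,\delta) =
\begin{cases}
0, & q = 0 \text{ and } c = 1,\\
\max\{c, 3\}, & q = 0 \text{ and } c \geq 2,\\
\max\{c-1,\, p+t\}, & q \geq 1.
\end{cases}
\]
All four parameters are computable in $O(n+m)$ time by a single DFS and a degree-parity scan, so the decision version is solved in the claimed time bound.

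For the lower bound, Lemma~\ref{lem:struct-undir} states that $F := A \cup D$ is a $T$-join in $K_n$, so each $v \in T$ has odd (hence $\geq 1$) $F$-degree and each $v \notin T$ has even $F$-degree. For every component $C$ disjoint from $T$, the connectivity of $G + A - D$ forces at least one $F$-edge with exactly one endpoint $u \in C$, and the evenness of $\deg_F(u) \geq 1$ pushes it up to $\geq 2$; hence $\sum_{v \in C} \deg_F(v) \geq 2$. Summing over the $p$ such components and over the vertices of $T$ yields $2|F| \geq |T| + 2p$, i.e., $|F| \geq p + t$. Combined with the obvious $|F| \geq c - 1$ for connectivity, this proves the $q \geq 1$ bound. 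In the $q = 0$ case $F$ is an even subgraph of $K_n$, which has size $0$ or $\geq 3$; if moreover $c \geq 3$, $F$ must touch every component at a vertex of $F$-degree $\geq 2$, forcing $|F| \geq c$.

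For the matching upper bound one exhibits $F$ explicitly. In the $q = 0, c = 2$ case (so $n \geq 3$), the triangle on $v_1 \in C_1$, $v_2 \in C_2$, and any third vertex $v_3 \in C_1 \cup C_2$ is an even $T$-join of size $3$, and a short case check on whether $v_3 \in C_1$ or $v_3 \in C_2$ confirms $G \oplus F$ is connected. For $q = 0, c \geq 3$, use the $c$-cycle through one vertex per component. For $q \geq 1$, start from a perfect matching $M$ on $T$ in $K_n$ of size $t$ and modify it: if $p \geq 1$, replace one edge $uv \in M$ with the path $u, w_1, \ldots, w_p, v$ through one vertex $w_i$ of each $p$-component, producing $p + t$ edges; choose the pairings in $M$ so that the induced quotient multigraph on the $q$-components becomes connected, feasible by matching across $q$-components whenever $q \leq t + 1$. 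When $c - 1 > p + t$ (equivalently $q > t + 1$), replace the base matching by a non-minimum $T$-join of size $c - 1$ (e.g., a star at a $T$-vertex of $F$-degree $3, 5, \ldots$, or a two-edge path through a non-$T$ anchor vertex whose two incident $F$-edges give it $\deg_F = 2$) whose induced quotient multigraph is connected. Classifying each edge of $F$ as an addition (if not in $G$) or a deletion (otherwise) produces the required pair $(A, D)$. The construction runs in $O(n^3)$ time, the dominant cost being the pairing and anchor-selection step (or a fallback call to Lemma~\ref{lem:t-join}). The main obstacle will be verifying that, in the $q \geq 1$ case, one can always realize $F$ of size exactly $\max\{c-1, p+t\}$ whose induced quotient multigraph on the $q$-components is connected; this requires a case split driven by the distribution of $T$ among the $q$-components and by the relative size of $c - 1$ versus $p + t$, but never fails because $K_n$ contains every edge one might want to use.
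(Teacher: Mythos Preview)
Your closed-form formula is incorrect in the case $q\geq 1$: you are missing an exceptional sub-case in which $\opts(G,\delta)=\tfrac12|T|+1$ rather than $\max\{c-1,\,p+t\}$. Concretely, take $G$ to be the path $v_1\,v_0\,x$, and set $\delta(v_0)=1$, $\delta(v_1)=0$, $\delta(x)=1$. Then $T=\{v_0,v_1\}$, $p=0$, $q=c=1$, and $t=1$, so your formula predicts $\opts=1$. But the unique size-$1$ $T$-join in $K_3$ is the edge $v_0v_1$, which lies in $G$ and is a bridge; classifying it as a deletion disconnects the graph, so no solution of size $1$ exists. An optimal solution has size $2$ (delete $v_0x$, add $v_1x$). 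More generally, the paper shows that whenever $p=0$, $q=1$, $G[T]$ is a star $K_{1,r}$, and every edge of $G[T]$ is a bridge of $G$, the optimum is $\tfrac12|T|+1$.

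The underlying flaw in your upper-bound construction is that you treat the ``quotient multigraph on the $q$-components'' as the only connectivity obstruction. It is not: an edge of your matching $M$ that happens to lie in $G$ becomes a \emph{deletion}, and deleting it can disconnect a single component of $G$ even though the quotient is trivially connected. Your construction never checks whether matched pairs are adjacent in $G$, so it cannot avoid this. The paper's proof handles the difficulty by first computing a maximum matching of $\overline{G[T]}$ (so that matched pairs are non-adjacent in $G$ and hence become additions, which never disconnect), and only then dealing carefully with any unmatched vertices of $T$---which is precisely where the $K_{1,r}$ bridge exception arises. You will need either to incorporate this idea or to add a separate argument detecting and handling the exceptional case; without it, both your formula and your construction fail.
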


\begin{proof}
Let $S=\{\ea,\ed\}$ and let $(G,\delta)$ be an instance of \cdpe$(S)$.  By
Lemma~\ref{lem:struct-undir}, we may assume that~$|T|$ is even, otherwise
$(G,\delta)$ is a no-instance.  If $G=K_2$ and $T=V(G)$, or $G=K_1+K_1$ and
$T=\emptyset$, then $(G,\delta)$ is a no-instance.  
If $G=K_2$ and $T=\emptyset$ then, trivially, $\opts(G,\delta)=0$, and if
$G=K_1+K_1$ and $T=V(G)$ then $\opts(G,\delta)=1$.
To avoid these trivial
instances, we therefore assume that~$G$ contains at least three vertices.
Under these assumptions we will show that $\opts(G,\delta)$ is always finite and give exact formulas for the value of $\opts(G,\delta)$. 
Let~$p$ be the number of components of~$G$ that do not contain
any vertex of~$T$ and let~$q$ be the number of components of~$G$ that contain
at least one vertex of~$T$.
We  prove the following series of statements.

\begin{itemize}
\item $\opts(G,\delta)=0$ if $p=1,q=0$,
\item $\opts(G,\delta)=\max\{3,p\}$ if $p \geq 2,q=0$,
\item $\opts(G,\delta)=\frac{1}{2}|T|+1$ if $p=0,q=1$,  
$G[T]=K_{1,r}$, for some $r\geq 1$, and each edge of~$G[T]$ is a bridge of $G$,
\item $\opts(G,\delta)=\max\{p+q-1,p+\frac{1}{2}|T|\}$ in all other cases.
\end{itemize}

\smallskip
\noindent
Note that if $p=1,q=0$, then the first statement applies and the trivial
solution $(A,D)=(\emptyset,\emptyset)$ is optimal.
We now consider the remaining three cases separately.

\medskip
\noindent
\displaycase{Case 1:} {\em$p \geq 2$ and $q=0$.}\\
Then $T=\emptyset$, so by Lemma~\ref{lem:struct-undir} for any
semi-solution~$(A,D)$, every vertex in $G_S(A\cup D)$ must have even degree in
$G_S(A \cup D)$.  In other words, every vertex of~$G$ must be incident to an
even number of edges in~$A\cup D$.  Since $p \geq 2$, the graph~$G$ is
disconnected, so any solution $(A,D)$ is non-empty.  This means that
$G_S(A\cup D)$ must contain a cycle, so $\opts(G,\delta)\geq\nobreak 3$ if a solution
exits.
Suppose $p=2$. As~$G$ has at least three vertices, it contains a component containing an edge~$xy$. Let $z$ be a vertex in its
other component. We set $A=\{xz,yz\}$ and $D=\{xy\}$ to obtain a solution for $(G,\delta)$.
Since $|A|+|D|=3$, this solution is optimal. Suppose $p \geq\nobreak 3$. Since $G+A-D$ must be connected for any
solution~$(A,D)$, every component in~$G$ must contain at least one vertex
incident to an edge of~$A$. By Lemma~\ref{lem:struct-undir}, this vertex must
be incident to an even number of edges of~$A \cup D$, meaning that it must be
incident to at least two such edges. Therefore $\opts(G,\delta) \geq p$.
Indeed, if we choose vertices $v_1,\ldots,v_p$, one from each component of~$G$,
then setting $A=\{v_1v_2,v_2v_3,\ldots,v_{p-1}v_p,v_pv_1\}$ and $D=\emptyset$
gives a solution of size~$p$, which is therefore optimal.
This concludes Case~1.

\medskip
\noindent
\displaycase{Case 2:} {\em $p=0,q=1$, $G[T]=K_{1,r}$ for some $r\geq 1$ and each edge
of~$G[T]$ is a bridge of $G$.}\\
Then $G$ is connected.  Let $v_0$ be the central vertex of the star and
let $v_1,\ldots,v_r$ be the leaves.  By Lemma~\ref{lem:struct-undir}, in any
semi-solution $(A,D)$, every vertex of $T$ must be incident to an odd number of
edges in $A \cup D$, so $\opts(G,\delta) \geq \frac{1}{2}|T|$. Suppose $(A,D)$
is a semi-solution of size $|A|+|D|=\frac{1}{2}|T|$. Then $A \cup D$ must be a
matching with each edge joining a pair of vertices of $T$. However, then
$v_0v_i \in A \cup D$ for some~$i$. Since $v_0v_i \in E(G)$, we must have
$v_0v_i \in D$. However, since $v_0v_i$ is a bridge of $G$, $v_0$ and $v_i$
must then be in different components of $G+A-D$, so $G+A-D$ is not connected
and $(A,D)$ is not a solution.  Therefore $\opts(G,\delta) \geq
\frac{1}{2}|T|+1$.

Next we show how to find a solution of size $\frac{1}{2}|T|+1$.  Since~$|T|$ is
even, $r$ must be odd. First suppose that $r=1$. Since~$G$ is connected and
$v_0v_1$ is a bridge, $G \setminus \{v_0v_1\}$ has exactly two components.
Since~$G$ contains at least three vertices, one of these components contains
another vertex $x$. Without loss of generality assume $xv_0 \in E(G)$, in which
case  $xv_1 \notin E(G)$.
Then setting $A=\{xv_1\}$ and $D=\{xv_0\}$ gives a
semi-solution.  Since $x,v_0,v_1$ are all in the same component of $G+A-D$, the
graph $G+A-D$ must be connected, so $(A,D)$ is a solution. Since
$|A|+|D|=2=\frac{1}{2}|T|+1$, this solution is optimal.  
Now suppose $r \geq 3$.
Let $A=\{v_1v_2,v_2v_3\}\cup\{v_{2i}v_{2i+1}\;|\;2\leq i\leq
\frac{1}{2}(r-1)\}$ and $D=\{v_0v_2\}$. Then $(A,D)$ is a semi-solution and since
$v_0,\ldots,v_r$ are all in the same component of $G+A-D$, we find that  $(A,D)$ is a
solution. Since $|A|+|D|=2+\frac{1}{2}(r-1)-1+1=\frac{1}{2}|T|+1$, this solution is
optimal.
This concludes Case 2.

\medskip
\noindent
\displaycase{Case 3:} {\em $q\geq 1$ and Case~2 does not hold.}\\
Then $T\neq\emptyset$.
Let $G_1,\ldots,G_p$ be the components of~$G$ without vertices of~$T$  and let
$G'=G-V(G_1)\cup\cdots\cup V(G_p)$. Note that $G'=G$ if $p=0$ and that~$G'$ is not
the empty graph, as $q>0$.  Choose $v_i \in V(G_i)$ for $i\in\{1,\ldots,p\}$.

We first show that $\opts(G,\delta)\geq \max\{p+q-1,p+\frac{1}{2}|T|\}$.
Since~$G$ has $p+q$ components, any solution $(A,D)$ must contain at least
$p+q-1$ edges in $A$ to ensure that $G+A-D$ is connected, so
$\opts(G,\delta)\geq p+q-1$. If $(A,D)$ is a solution then every
component~$G_i$ must contain a vertex incident to some edge in~$A$.  By
Lemma~\ref{lem:struct-undir}, this vertex must be incident to an even number of
edges of $A \cup D$, meaning that it must be incident to at least two such
edges. By Lemma~\ref{lem:struct-undir}, every vertex of~$T$ must be incident to
some edge in~$A \cup D$. Therefore~$A \cup D$ must contain at least
$p+\frac{1}{2}|T|$ edges, so $\opts(G,\delta)\geq p+\frac{1}{2}|T|$.

We now show how to find a solution of size $\max\{p+q-1,p+\frac{1}{2}|T|\}$. We
start by finding a maximum matching~$M$ in~$\overline{G[T]}$. Let~$U$ be the
set of vertices in~$T$ that are not incident to any edge in~$M$. We divide the
argument into two cases, depending on the size of $U$.

\medskip
\noindent
\displaycase{Case 3a:} $U=\emptyset$.\\
In this case, by Lemma~\ref{lem:struct-undir}, setting $A=M$ and $D=\emptyset$
gives a semi-solution. Now suppose that $uv,u'v'\in M$, such that~$uv$ is not a
bridge in $G+M$ and the vertices~$u$ and~$u'$ are in different components of
$G+M$.  Let~$M'=M\setminus\{uv,u'v'\}\cup\{u'v,uv'\}$. Then~$M'$ is also a
maximum matching in~$\overline{G[T]}$. However, $G+M'$ has one component less
than $G+M$.  Indeed, since~$uv$ is not a bridge in $G+M$, the vertices
$u,u',v,v'$ must all be in the same component of $G+M'$. Therefore, if such
edges $uv,u'v'\in M$ exist, we replace~$M$ by~$M'$. We do this exhaustively
until no further such pairs of edges exist. At this point either every edge
in~$M$ is a bridge
in $G+M$
or every edge in~$M$ is in the same component of $G+M$.  We
consider these possibilities separately.

First suppose that every edge in~$M$ is a bridge
in $G+M$. 
Choose $uv \in\nobreak M$ and let
$Q_1,\ldots,Q_k$ be the components of $G+M$, with $u,v \in V(Q_1)$.  Note that
since every edge in~$M$ is a bridge, $k=p+q-|M|$. Now let $x_i \in V(Q_i)$ for
$i \in \{2,\ldots,k\}$. Let $D= \emptyset$ and let $A=M$ if $k=1$ and $A=M
\setminus \{uv\} \cup \{ux_2,x_2x_3,\ldots,x_{k-1}x_k,x_kv\}$ otherwise.  Now
every vertex in $G+A-D$ has the same degree parity as in $G+M$, so~$(A,D)$ is a
semi-solution by Lemma~\ref{lem:struct-undir}. The graph $G+A-D$ is connected,
so~$(A,D)$ is a solution. As  $|A|+|D|=|M|-1+p+q-|M|+0=p+q-1$, we find that $(A,D)$ is an optimal solution.

Now suppose that every edge in~$M$ is in the same component of $G+M$. Note that
$G_1,\ldots,G_p$ are the remaining components of $G+M$.  Choose $uv \in M$. Let
$D= \emptyset$ and let $A=M$ if $p=0$ and $A=M \setminus \{uv\} \cup
\{uv_1,v_1v_2,\ldots,v_{p-1}v_p,v_pv\}$ otherwise.  Then every vertex in
$G+A-D$ has the same parity as in $G+M$ and $G+A-D$ is connected, so by
Lemma~\ref{lem:struct-undir} $(A,D)$ is a solution. Since
$|A|+|D|=\frac{1}{2}|T|-1+p+1=p+\frac{1}{2}|T|$, this solution is optimal.
This concludes Case 3a.

\medskip
\noindent
\displaycase{Case 3b:} $U\neq\emptyset$.\\
Note that~$z=|U|$ must be even since~$|T|$ is even. Every pair of vertices
in~$U$ must be non-adjacent in~$\overline{G}$, as otherwise $M$ would not be
maximum. Therefore $G[U]$ is a
clique. Let $U=\{u_1,\ldots,u_z\}$.

We claim that $Q=G'+M$ is connected. Clearly every vertex of the clique~$U$
must be in the same component of $Q=G'+M$. Suppose for contradiction that $Q_1$
is a component of~$Q$ that does not contain~$U$. Then~$Q_1$ must contain
some edge $w_1w_2 \in M$. However, in this case $M'=M \setminus\{w_1w_2\}
\cup \{u_1w_1,u_2w_2\}$ is a larger matching in $\overline{G[T]}$ than $M$,
which contradicts the maximality of $M$. Therefore~$Q$ is connected.

Let $M'=\{u_1u_2,u_3u_4,\ldots,u_{z-1}u_z\}$. If $z \geq 4$ then since $U$ is a
clique, $G'+M-M'$ is connected. If $p=0$ set $A=M$ and $D=M'$. If $p>0$
set $A=M \cup \{u_1v_1,v_1v_2,\ldots,v_{p-1}v_p,v_pu_2\}$ and $D=M'
\setminus \{u_1u_2\}$. Then $G+A-D$ is connected, so $(A,D)$ is a solution by
Lemma~\ref{lem:struct-undir}. This solution has size $|A|+|D|=p+\frac{1}{2}|T|$,
so it is optimal.

Now suppose that $z\leq 3$. Then $z=2$. If $p>0$, let $A=M \cup
\{u_1v_1,\allowbreak v_1v_2,\ldots,v_{p-1}v_p,\allowbreak v_pu_2\}$ and $D=\emptyset$. Then $G+A-D$ is connected, so
$(A,D)$ is a solution by Lemma~\ref{lem:struct-undir}. This solution has size
$|A|+|D|=p+\frac{1}{2}|T|$, so it is optimal.
Assume that $p=0$, so $G+M$ contains only one component. If~$u_1u_2$ is
not a bridge in $G+M$, let $A=M$ and $D=\{u_1u_2\}$. Then $G+M$ is connected,
so $(A,D)$ is a solution. This solution has size $|A|+|D|=p+\frac{1}{2}|T|$, so
it is optimal.

Now assume that $u_1u_2$ is a bridge in $Q=G+M$. Let $Q_1$ and $Q_2$ denote
the components of $Q-\{u_1u_2\}$ with $u_1 \in V(Q_1)$ and $u_2 \in V(Q_2)$. Note
that $u_1u_2$ is also a bridge in~$G$.  We claim that the edges of~$M$ are
either all in $Q_1$ or all in~$Q_2$. Suppose for contradiction that $y_1z_1 \in
E(Q_1) \cap M$ and $y_2z_2 \in E(Q_2) \cap M$. Then $M'=M \setminus\{y_1z_1,y_2z_2\} \cup
\{u_1y_2,u_2y_1,z_1z_2\}$ would be a larger matching in~$\overline{G[T]}$ than
$M$, contradicting the maximality of~$M$.  Without loss of generality, we may
therefore assume that all edges of~$M$ are in $Q_1$.

Let $M=\{x_1y_1,\ldots,x_ry_r\}$, where $r=\frac{1}{2}|T|-1$.  We claim that
$u_1$ must be adjacent in~$G$ to all vertices of $T \setminus \{u_1\}$. Suppose
for contradiction that~$u_1$ is non-adjacent in~$G$ to some vertex of $T
\setminus \{u_1\}$. Since $u_1u_2 \in E(G)$, this vertex would have to be
incident to some edge in~$M$. Without loss of generality, assume $u_1x_1 \notin
E(G)$. Then $M'=M\setminus\{x_1y_1\}\cup\{u_1x_1,u_2y_1\}$ would be a larger
matching in $\overline{G[T]}$ than $M$, contradicting the maximality of $M$.
Therefore~$u_1$ is adjacent in~$G$ to every vertex of $T \setminus \{u_1\}$. In
particular, since $p=0$, it follows that $q=1$ and $G$ is connected.

Suppose that every edge between $u_1$ and $T \setminus \{u_1\}$ is a bridge in
$G$. Then no two vertices of $T \setminus \{u_1\}$ can be adjacent, and
$G[T]=K_{1,r}$. However, then Case~2 applies, which we assumed was not the
case. Without loss of generality, we may therefore assume that $u_1x_1$ is not
a bridge in $G$. Let $A=M\setminus\{x_1y_1\}\cup\{y_1u_2\}$ and $D=\{u_1x_1\}$.
Then $G+A-D$ is connected, so $(A,D)$ is a solution. Since
$|A|+|D|=\frac{1}{2}|T|-1-1+1+1=p+\frac{1}{2}|T|$, this solution is optimal.
This concludes Case~3b and therefore also concludes Case~3.

\medskip
\noindent
It is clear that $\opts(G,\delta)$ can be computed in $O(n+m)$ time.  We also
observe that the above proof is constructive, that is, we not only solve the
decision variant of \cdpe($\ea,\ed$) but we can also find an optimal solution.
To do so, we must find a maximum matching in~$\overline{G[T]}$. This takes
$O(n^{5/2})$ time by Lemma~\ref{l-mic}.
However, the bottleneck is in Case 3a, where we are glueing components by replacing two matching edges by two other matching edges, which takes $O(n^2)$ time.
As the total number of times we may need to do this is $O(n)$, this procedure may take $O(n^3)$ time in total.
Hence, we can obtain an optimal solution in
$O(n^3)$ time.
\end{proof}

\subsection{The \W-Hard Cases}\label{s-wun}

We first describe the problem used in our \W-hardness construction.
A {\em red/blue graph} is a bipartite graph $G=({\cal R},{\cal B},E)$ whose
vertices are partitioned into independent sets ${\cal R}$ (the red vertices)
and ${\cal B}$ (the blue vertices). A non-empty set $R\subseteq {\cal R}$ is an
{\em odd set} if every vertex in ${\cal B}$ has an odd number of neighbours in
$R$. The {\sc Odd Set} problem takes as input a red/blue graph $G=({\cal
R},{\cal B},E)$ and an integer $k>0$, and asks whether there is an odd set
$R\subseteq {\cal R}$ of size at most $k$. This problem is known to be
\NP-complete as well as \W-hard when parameterized
by~$k$~\cite{DowneyFVW99}. For our purposes, we need to show that the same
holds for the following restricted version of the problem.

\begin{center}
\begin{boxedminipage}{.99\textwidth}
\begin{tabular}{rl}
& \textsc{Odd-Sized Odd Set}\\
\textit{~~~~Instance:} & A red/blue graph $G=({\cal R},{\cal B},E)$ where $|{\cal R}|$ is odd, and\\
                       & an odd integer $k>0$.\\
\textit{Question:} & Is there an odd set $R\subseteq {\cal R}$ such that $|R|\leq k$ and $|R|$\\ & is odd?\\
\end{tabular}
\end{boxedminipage}
\end{center}

\begin{lemma}
\label{l-oddsized}
{\sc Odd-Sized Odd Set} is \NP-complete as well as \W-hard when
parameterized by $k$.
\end{lemma}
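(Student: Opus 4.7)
The plan is to reduce from the {\sc Odd Set} problem, which is \NP-complete and \W-hard parameterized by~$k$~\cite{DowneyFVW99}. Membership of {\sc Odd-Sized Odd Set} in \NP\ is immediate since a putative odd set can be verified in polynomial time. Given an instance $(G, k)$ of {\sc Odd Set} with $G = ({\cal R}, {\cal B}, E)$, I will construct in polynomial time an equivalent instance $(G', k')$ of {\sc Odd-Sized Odd Set} with $k' \in \{k, k+1\}$; this establishes both \NP-completeness and \W-hardness simultaneously, as the parameter increase is a constant.

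The construction relies on two elementary gadgets. A \emph{pendant gadget} consists of a new red vertex $r^*$ joined to a new blue vertex $b^*$ by a single edge; since $b^*$ has $r^*$ as its only neighbour, $r^*$ is forced into every odd set, contributing $+1$ to both $|{\cal R}'|$ and to the size of any odd set. An \emph{isolated red vertex} $r_\circ$ contributes to no blue vertex's neighbour count, so it may be freely included in or excluded from any odd set; each such vertex increases $|{\cal R}'|$ by $1$ and allows an optional increment of $1$ to the solution size.

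A case analysis on the parities of $|{\cal R}|$ and $k$ fixes the precise construction: if $|{\cal R}|$ and $k$ are both even, I add one pendant and two isolated reds and set $k' = k+1$; if $|{\cal R}|$ is even and $k$ is odd, I add a single isolated red and set $k' = k$; if $|{\cal R}|$ is odd and $k$ is even, I add one pendant and one isolated red and set $k' = k+1$; finally, if both are odd, I add two isolated reds and set $k' = k$. In every case a direct check shows that $|{\cal R}'|$ and $k'$ are odd, as required by the input conditions. Moreover, the odd sets of $G'$ are precisely the sets $R \cup P \cup S$, where $R$ is an odd set of $G$, $P$ is the (forced) set of pendant reds, and $S$ is any subset of the added isolated reds. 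Since $|S|$ ranges over $\{0,1\}$ or $\{0,1,2\}$ depending on the case, the parity of $|R \cup P \cup S|$ can be tuned independently of $|R|$ to be odd without exceeding the budget~$k'$, yielding the desired equivalence with the original {\sc Odd Set} instance.

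The main obstacle is the parity bookkeeping in the four cases: one must verify in each case that an odd set of $G$ of size at most $k$ of \emph{either} parity can be turned into an odd-sized odd set of $G'$ of size at most $k'$ by an appropriate choice of $|S|$, and conversely that any such odd-sized odd set of $G'$ projects to an odd set of $G$ of size at most $k$. The slack $k'-k \in \{0,1\}$ is exactly what is needed to absorb the one extra vertex required when the parity of an optimal odd set in $G$ does not already match the target parity in $G'$.
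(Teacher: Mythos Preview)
Your proof is correct, but it takes a genuinely different route from the paper's. The paper builds a single uniform gadget: it takes two disjoint copies $G_1,G_2$ of the input graph, adds blue ``selector'' vertices $x_1,\ldots,x_p$ with each $x_i$ adjacent to both copies of $r_i$, and adds a pendant red $r^*$ adjacent to every $x_i$ together with a private blue $b^*$; it then sets $k'=2k+1$. The selector vertices force any odd set of $G'$ to use both copies of each chosen red (since $r^*$ is already forced), so every odd set automatically has odd size $2|R|+1$. Your construction is more elementary: you add only isolated red vertices and possibly one pendant, and handle the four parity combinations of $(|{\cal R}|,k)$ separately to ensure $|{\cal R}'|$ and $k'$ are odd while keeping $k'\in\{k,k+1\}$. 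This buys you a tighter parameter bound and a smaller instance, at the price of a case analysis the paper avoids. One small point worth making explicit in your write-up: the backward direction relies on the projection $R=R'\cap{\cal R}$ being non-empty, which follows because every original blue vertex must see an odd (hence positive) number of reds in $R'$ and is adjacent only to original reds; the degenerate case ${\cal B}=\emptyset$ is trivial and can be dismissed up front.
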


\begin{proof}
The {\sc Odd-Sized Odd Set} problem trivially belongs to \NP. To prove
that the problem is \NP-hard and \W-hard when parameterized
by~$k$, we give a parameterized reduction from {\sc Odd Set}. 
Recall that this problem is \NP-complete as well as \W-hard when
parameterized by~$k$~\cite{DowneyFVW99}.

Given an instance $(G,k)$ of {\sc Odd Set}, where $G=({\cal R},{\cal B},E)$ is
a red/blue graph with ${\cal R}=\{r_1,\ldots,r_p\}$ and ${\cal
B}=\{b_1,\ldots,b_q\}$ and $k$ is 
a positive integer, we construct an instance $(G',k')$ of {\sc Odd-Sized Odd
Set} as follows. We start with the disjoint union $G_1\uplus G_2$ of two copies
of $G$, where $G_i=({\cal R}_i,{\cal B}_i,E_i)$. We then add an independent set
${\cal X}=\{x_1,\ldots,x_p\}$. For each $i\in \{1,\ldots,p\}$, we make~$x_i$
adjacent to the two copies of $r_i$ in ${\cal R}_1\cup {\cal R}_2$. We then add
a vertex $r^*$ that is made adjacent to all vertices in ${\cal X}$, as well as
a vertex $b^*$ that is made adjacent to $r^*$ only. Let $G'=({\cal R}',{\cal
B}',E')$ denote the obtained red/blue graph, where ${\cal R}'={\cal R}_1\cup
{\cal R}_2\cup \{r^*\}$ and ${\cal B}'={\cal B}_1\cup {\cal B}_2 \cup {\cal
X}\cup \{b^*\}$. 
Notice that $|{\cal R'}|=2|{\cal R}|+1$ and $|{\cal R'}|$ is odd.  We set
$k'=2k+1$. 
Clearly,  $k'$ is odd.  We claim that $(G',k')$ is a yes-instance of {\sc
Odd-Sized Odd Set} if and only if $(G,k)$ is a yes-instance of {\sc Odd Set}.

First suppose $(G,k)$ is a yes-instance of {\sc Odd Set}. Then there is an odd
set $R\subseteq {\cal R}$ such that $|R|\leq k$. Consider the set $R'\subseteq
V(G')$ consisting of the two copies of $R$ in $G'$, plus the vertex $r^*$. For
each vertex $b\in {\cal B}_1\cup {\cal B}_2$, the number of vertices $b$ has in
$R'$ equals the number of neighbours the corresponding vertex in ${\cal B}$ has
in $R$. Since $R$ is an odd set in $G$, this number is odd for every vertex in
${\cal B}_1\cup {\cal B}_2$. Let $x_i\in {\cal X}$. If $r_i\in R$, then $x_i$
has three neighbours in $R'$, namely the two copies of $r_i$ in ${\cal R}_1\cup
{\cal R}_2$ and vertex $r^*$. If $r_i\notin R$, then $r^*$ is the only
neighbour of $x_i$ in $R'$. Finally, $b^*$ has exactly one neighbour in $R'$,
namely $r^*$. This proves that $R'$ is an odd set. Since $|R'|=2|R|+1\leq
2k+1=k'$ and $|R'|$ is odd, we conclude that $(G',k')$ is a yes-instance of
{\sc Odd-Sized Odd Set}. 

Now suppose that $(G',k')$ is a yes-instance of {\sc Odd-Sized Odd Set}, and
let $R'\subseteq {\cal R}'$ be an odd set in~$G'$ such that $|R'|\leq k'$ and
$|R'|$ is odd. Since $r^*$ is the only neighbour of $b^*$ in~$G'$, it holds
that $r^*\in R'$. This implies that every vertex in ${\cal X}$ must have either
two or zero neighbours in $R'\setminus \{r^*\}$. Let ${\cal X}'$ be the set
consisting of those vertices in ${\cal X}$ that have exactly two neighbours in
$R'\setminus \{r^*\}$. Since no two vertices in ${\cal X}$ have a common
neighbour other than~$r^*$ and $|R'|\leq k'=2k+1$, we find that $|{\cal
X}'|\leq k$. Let $R=\{r_i \in {\cal R} \mid x_i\in {\cal X}'\}$, and let~$R_1'$
and $R_2'$ denote the corresponding vertices in~${\cal R}_1$ and ${\cal R}_2$,
respectively. For each $x_i\in {\cal X}'$, the two neighbours of $x_i$ other
than $r^*$ are exactly the two copies of $r_i$ in $G'$. This implies that
$|R_1'|=|R_2'|=|{\cal X}'|\leq k$. By the definition of $R_1'$ and the
construction of $G'$, every vertex in ${\cal B}_1$ has an odd number of
neighbours in $R_1'$. Consequently, every vertex in ${\cal B}$ has an odd
number of neighbours in $R$. This implies that $R$ is an odd set in $G$ of size
at most $k$.
\end{proof}
We are now ready to prove the hardness results of this section.

\begin{sloppypar}
\begin{theorem}\label{thm:vertex-undir}
Let $\{\vd\} \subseteq  S\subseteq \{\vd,\ed,\ea\}$. Then \cdpe$(S)$ is
\NP-complete and \W-hard when parameterized by $k$, even if
$\delta\equiv 0$.
\end{theorem}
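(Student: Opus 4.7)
The plan is to give a parameterized reduction from \textsc{Odd-Sized Odd Set}, which is \NP-complete and \W-hard with parameter $k$ by Lemma~\ref{l-oddsized}. A single reduction can handle all choices of $S$ with $\{\vd\} \subseteq S \subseteq \{\vd, \ed, \ea\}$ (the case $S=\{\vd\}$ in fact already follows from Cai and Yang~\cite{CaiY11}), producing a new parameter $k' = k + O(1)$, so that the classical and parameterized hardness follow simultaneously. Membership in \NP{} is routine: a certificate consists of a sequence of at most $k$ operations, verifiable in polynomial time.

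Given an input $(G, k)$ of \textsc{Odd-Sized Odd Set} with $G = ({\cal R}, {\cal B}, E)$, $|{\cal R}|$ odd and $k$ odd, I would build a graph $G'$ as follows. Keep ${\cal R}$ as the set of ``choice'' vertices. Replace each blue vertex $b \in {\cal B}$ by a rigid \emph{blue gadget} $B_b$---concretely, a clique on roughly $2k+3$ copies of $b$, each adjacent to exactly $N_G(b) \subseteq {\cal R}$. Add auxiliary vertices and edges so that every vertex of every $B_b$ starts with odd degree while every red vertex starts with even degree, include an ``oddness'' gadget analogous to the $r^*, b^*$ vertices used in the proof of Lemma~\ref{l-oddsized} to force the surviving deleted subset of ${\cal R}$ to have odd cardinality, and attach a low-degree connectivity backbone that keeps $G'$ connected after any $k$ red-vertex deletions. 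Set $\delta \equiv 0$ and $k' = k + O(1)$. Then an odd-sized odd set $R \subseteq {\cal R}$ of size at most $k$ in $G$ corresponds, by deleting exactly $R$, to flipping the parity of every $B_b$-vertex by $|N_G(b) \cap R| \equiv 1 \pmod 2$, making them all even-degree; the oddness gadget handles red-side parities and the backbone preserves connectivity.

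The hard part will be the converse direction, in particular ruling out that, when $\ea$ or $\ed$ is in $S$, the solver uses edge operations in place of vertex deletions. The gadget size $\approx 2k+3$ is tailored for exactly this: since each edge operation flips the parities of only two vertices and every vertex of $B_b$ starts odd, flipping the parities of an entire blue gadget via edges alone would require at least $k+2$ edge operations, exceeding $k'$; moreover, deleting only a few $B_b$-vertices cannot resolve the parities of the remaining $B_b$-vertices within budget. A careful counting argument along these lines should force every valid solution within budget to leave the blue gadgets untouched, to delete only red vertices, and to make those deletions an odd-sized odd set of size at most $k$, giving the required equivalence.
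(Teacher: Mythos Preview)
Your high-level plan—reducing from \textsc{Odd-Sized Odd Set} and using oversized gadgets to force any size-$k'$ solution to consist of vertex deletions only—is indeed the paper's strategy. However, your specific construction has a real gap in the converse direction.

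The problem is the choice of a \emph{clique} for the gadget $B_b$. You claim that ``deleting only a few $B_b$-vertices cannot resolve the parities of the remaining $B_b$-vertices within budget,'' but in a clique on $2k+3$ vertices, deleting a \emph{single} vertex decreases the degree of every remaining gadget vertex by exactly one, flipping all their parities at once. So one internal deletion already repairs the whole blue gadget, at the cost of creating parity defects at the red neighbours $N_G(b)$. Whether a solver can chain such moves into a cheap non-intended solution is not at all obvious, and your counting argument as stated does not rule it out. (By contrast, your $\lceil (2k+3)/2\rceil$ lower bound for edge operations alone is fine.)

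The paper avoids this by placing the large gadgets on the other side and making them independent sets rather than cliques: it takes $k$ copies ${\cal R}_1,\ldots,{\cal R}_k$ of ${\cal R}$ and attaches to each copy an \emph{independent set} ${\cal X}_i$ of size $2(k+1)$, completely joined to ${\cal R}_i$. In an independent set, deleting one gadget vertex does \emph{not} affect the parities of the others, so any single operation handles at most two vertices of a fixed ${\cal X}_i$; with total budget $k$ this forces exactly one deletion from each ${\cal R}_i$ and hence eliminates edge operations altogether. The $k$-fold duplication of ${\cal R}$ also yields a clean extraction of an odd-sized odd set (vertices deleted an odd number of times across the copies), with no separate oddness gadget needed; the oddness of $|{\cal R}|$ automatically makes every ${\cal X}_i$-vertex start with odd degree, so no auxiliary parity corrections are required there either. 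If you want to salvage your blue-side construction, switching the $B_b$ from cliques to independent sets is the minimal fix; even then, the auxiliary parity adjustments and the oddness gadget must be spelled out so that they do not themselves provide a one-shot way to flip an entire $B_b$.
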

\end{sloppypar}

\begin{proof}
The \cdpe($S$) problem clearly belongs to \NP. To prove that the problem
is \NP-complete and \W-hard when parameterized by $k$, even if
$\delta\equiv 0$, we reduce from {\sc Odd-Sized Odd Set}. The latter problem is
\NP-complete as well as \W-hard when parameterized by $k$ due to
Lemma~\ref{l-oddsized}, and this clearly remains true when we assume that
$|{\cal R}|\geq 2$ and every vertex in ${\cal R}$ has at least one neighbour in
${\cal B}$.

\begin{sloppypar}
Let $(G,k)$ be an instance of {\sc Odd-Sized Odd Set}, where $G=({\cal R},{\cal
B},E)$ is a red/blue graph with ${\cal R}=\{r_1,\ldots,r_p\}$ and ${\cal
B}=\{b_1,\ldots,b_q\}$, and where $|{\cal R}|\geq 2$ and every vertex in ${\cal
R}$ has at least one neighbour in ${\cal B}$. We construct a graph $G^*$ as
follows. We start with two copies ${\cal B}_1,{\cal B}_2$ of ${\cal B}$, as
well as $k$ copies ${\cal R}_1,\ldots,{\cal R}_k$ of ${\cal R}$. Let ${\cal
B}^*={\cal B}_1\cup {\cal B}_2$ and ${\cal R}^*=\bigcup_{i=1}^k {\cal R}_i$.
For any two vertices $u\in {\cal B}^*$ and $v\in {\cal R}^*$, we add the edge
$uv$ if and only if the corresponding vertices in $G$ are adjacent. For every
vertex $b\in {\cal B}$, we add an edge between $b'$ and $b''$ in~$G^*$ if and
only if $b$ has even degree in~$G$, where $b',b''$ denote the copies of $b$ in
${\cal B}_1$ and ${\cal B}_2$, respectively. For every $i\in \{1,\ldots,k\}$,
we add an independent set ${\cal X}_i$ 
of size $2(k+1)$, and make all the vertices in ${\cal X}_i$ adjacent to every
vertex in~${\cal R}_i$. Let ${\cal X}^*=\bigcup_{i=1}^k {\cal X}_i$. Finally,
we add two vertices $y_1,y_2$ and make each of them adjacent to every vertex in
${\cal B}^*$. This completes the construction of $G^*$. We define a parity
function $\delta: V(G^*)\rightarrow \{0,1\}$ by setting $\delta(v)=0$ for every
$v\in V(G^*)$.
\end{sloppypar}

We will show that $(G^*,k,\delta)$ is a yes-instance of \cdpe($S$) if and only
if $(G,k)$ is a yes-instance of {\sc Odd-Sized Odd Set}. We first make some
observations about the vertex degrees in $G^*$. Recall that both $|{\cal R}|$
and $k$ are odd by the definition of {\sc Odd-Sized Odd Set}. With this in
mind, it is easy to verify that every vertex in ${\cal B}^*\cup {\cal X}^*$ has
odd degree, while every vertex in ${\cal R}^*\cup \{y_1,y_2\}$ has even degree.

Suppose $(G,k)$ is a yes-instance of {\sc Odd-Sized Odd Set}. Then there exists
an odd set $R\subseteq {\cal R}$ in $G$ such that $|R|\leq k$ and $|R|$ is odd.
Fix an arbitrary order on the vertices of $R$. For each $i\in
\{1,\ldots,|R|\}$, delete from ${\cal R}_i$ the copy of the $i$th vertex of
$R$. If $|R|<k$, then for each $i\in \{|R|+1,\ldots,k\}$, we delete the copy of
$r_1$ from ${\cal R}_i$ (regardless of whether or not $r_1\in R$); since~$|R|$
is odd and $k$ is odd, we delete an even number of copies of $r_1$
in this second step. Let $G'$ denote the obtained graph. Observe that we
obtained $G'$ from $G^*$ by deleting exactly $k$ vertices. We claim that~$G'$
is Eulerian.

Since we deleted exactly one vertex from each set ${\cal R}_i$, the degree of
each vertex in ${\cal X}^*$ decreased by exactly~$1$, making the degrees of all
these vertices even. Consider an arbitrary vertex $b\in {\cal B}^*$. Recall
that $b$ has odd degree in $G^*$. The vertex in $G$ corresponding to $b$ has an
odd number of neighbours in $R$ due to the fact that $R$ is an odd set. Exactly
one copy of each of these neighbours was deleted from $G^*$, plus an additional
even number of copies of $r_1$ in case $|R|<k$. This means that out of all the
neighbours of $b$ in $G^*$, an odd number are deleted, implying that $b$ has
even degree in $G'$. Now consider the degrees of the vertices in ${\cal
R}^*\cup \{y_1,y_2\}$. Observe that these vertices form an independent set in
$G^*$, and every vertex that is deleted from $G^*$ belongs to this set. Hence,
the parity of the degrees of the vertices in ${\cal R}^*\cup \{y_1,y_2\}$ does
not change, so all these vertices have even degree in $G'$. It remains to argue
that $G'$ is connected. Recall that we assume that $|{\cal R}|\geq 2$ and every
vertex in ${\cal R}$ has at least one neighbour in ${\cal B}$. Since we deleted
exactly one vertex from each set ${\cal R}_i$, there is at least one edge in
$G'$ between a remaining vertex of ${\cal R}_i$ and a vertex in ${\cal B}^*$.
This, together with the fact that the vertices in ${\cal X}^*\cup \{y_1,y_2\}$
are all present in $G'$, implies that $G'$ is connected. We conclude that $G'$
is Eulerian.

\begin{sloppypar}
For the reverse direction, suppose $(G^*,k,\delta)$ is a yes-instance of
\cdpe($S$). Then there is a sequence $L$ of at most $k$ operations from~$S$
transforming $G^*$ into a Eulerian graph $G'$. We claim that $L$ consists of
exactly $k$ vertex deletions, and that $L$ deletes exactly one vertex from each
set ${\cal R}_i$. Recall that each vertex in~${\cal X}^*$ has odd degree in
$G^*$. 
Let $i\in \{1,\ldots,k\}$. In order to change the (parity of the) degree of a
vertex $x\in {\cal X}_i$, we need to perform (at least) one of the following operations:
\end{sloppypar}
\begin{enumerate}[(i)]
\renewcommand{\theenumi}{(\roman{enumi})}
\renewcommand\labelenumi{\theenumi}
\item delete $x$, 
\item delete an edge incident with $x$, 
\item add an edge incident with $x$, or 
\item delete one of the neighbours of $x$. 
\end{enumerate}
Operations~(i)--(iii) 
leave the parity of at least two vertices in ${\cal X}_i$ unaltered.
Hence, from the construction of $G^*$ and the fact that $|L|=k$, it follows
that $L$ deletes exactly one vertex from each set ${\cal R}_i$.

Let $R^*\subseteq {\cal R}^*$ denote the set of vertices that are deleted from
$G^*$ by performing the operations in $L$. Note that $|R^*|=|L|=k$, and hence
$R^*$ has odd size. Let $R\subseteq {\cal R}$ be the set of those vertices in $G$ of
which $R^*$ contains an odd number of copies, i.e. $R=\{r_i\in {\cal R} \mid
R^* \mbox{ contains an odd number of copies of } r_i\}$. We claim that $R$ is a
solution for the instance $(G,k)$ of {\sc Odd-Sized Odd Set}. Since $|R^*|$ is
odd,~$|R|$ must be odd as well. It therefore remains to show that~$R$ is an odd
set in~$G$. For contradiction, suppose there is a vertex $b_j\in {\cal B}$ that
has an even number of neighbours in $R$. Consider the copy of $b_j$ in ${\cal
B}_1$; let us denote this copy by $b$. Recall that for every $r_i\in {\cal R}$,
vertex $b$ is adjacent either to all copies of $r_i$ in $G^*$ or to none of
these copies. The fact that $b_j$ has an even number of neighbours in $R$ implies
that $b$ is adjacent to an even number of vertices in $R^*$. This means that
the degree of $b$ in $G^*$ has the same parity as the degree of $b$ in~$G'$.
Since $b$ has odd degree in $G^*$ and $G'$ is Eulerian, we have thus obtained
the desired contradiction.
\end{proof}

\section{Connected Degree Balance Editing}\label{sec:directed}

Let $S\subseteq \{\ea,\ed,\vd\}$.  In Section~\ref{s-polyund} 
we will show that
\cdbe($S$) is polynomial-time solvable if $\{\ea\}\subseteq S \subseteq
\{\ea,\ed\}$ and
in Section~\ref{s-wund} we will show 
that it is \NP-complete
and \W-hard with parameter~$k$ if $\vd\in S$.

\subsection{The Polynomial-Time Solvable Cases}\label{s-polyund} 

Let  $\{\ea\}\subseteq S \subseteq \{\ea,\ed\}$.  
Let
$(G,\delta,k)$ be an instance of \cdbe($S$) with $G=(V,E)$. Let~$A$ be a set of
arcs not in $G$, and let~$D$ be a set of arcs in~$G$, with $D=\emptyset$ if
$S=\{\ea\}$.  We say that $(A,D)$ is a {\em solution} for $(G,\delta,k)$ if its
{\em size} $|A|+|D|\leq k$, the equation $d_H^+(u)-d_H^-(u)=\delta(u)$ holds for every vertex~$u$ and
the graph $H=G+A-D$ is connected; if~$H$ is not connected
then $(A,D)$ is a {\em semi-solution} for $(G,\delta,k)$.  Just as in
Section~\ref{s-polyun} 
we consider the optimization version for \cdbe$(S)$ and we use the same terminology.

Let $(G,\delta)$ be an instance of (the optimization version) of \cdbe$(S)$
where $G=(V,E)$.  Let $T=T_{(G,\delta)}$ be the set of vertices $v$ such that
$d^+_G(v)-d^-_G(v)\neq \delta(v)$.  Define a function $f_{(G,\delta)}:T\to
\mathbb{Z}$ by $f(v)=f_{(G,\delta)}(v)=\delta(v)-d^+_G(v)+d^-_G(v)$ for every
$v\in T$. 

We construct a directed multigraph $G_S$ with vertex set~$V$ 
and arc set
determined as follows.
If $\{\ea\}\subseteq S \subseteq \{\ea,\ed\}$, for each pair of distinct
vertices~$u$ and~$v$ in~$G$, if $(u,v) \notin E$, add the arc $(u,v)$ to~$G_S$
(these arcs are precisely those that can be added to~$G$). If $S=\{\ea,\ed\}$,
for each pair of distinct vertices~$u$ and~$v$, if $(u,v) \in E$, add the
arc $(v,u)$ to~$G_S$ (these arcs are precisely those whose reverse can be
deleted from~$G$). Note that adding a
(missing)
arc has the same effect on the degree
balance of the vertices in a digraph as deleting the reverse of the arc 
(if it exists).
Also 
observe that $G_S$ becomes a directed multigraph rather than a digraph only if
$S=\{\ea,\ed\}$ and there are distinct vertices~$u$ and~$v$ such that $(u,v)
\in E$ and $(v,u) \notin E$ applies.  Moreover,~$G_S$ contains at most two
copies of any arc, and if there are two copies of $(u,v)$ then $(v,u)$ is not
in $G_S$.

Let $F$ be a minimum directed
$f$-join in $G_S$ (if one exists).  Note that~$F$ may contains two copies of
the same arc if $G_S$ is a directed multigraph.  
Also note that for any pair of vertices
$u,v$, either $(u,v)\notin F$ or $(v,u)\notin\nobreak F$, otherwise $F'=F \setminus
\{(u,v),(v,u)\}$ would be a smaller $f$-join in $G_S$, contradicting the
minimality of $F$. 

We define two sets $A_F$ and~$D_F$ which, as we will show, correspond to a
semi-solution $(A_F,D_F)$ of $(G,\delta)$.  Initially set $A_F=D_F=\emptyset$.
Consider the arcs in $F$.   
If $F$ contains $(u,v)$ exactly once then add $(u,v)$ to  
$A_F$ if $(u,v)\notin E$ and add $(v,u)$ to $D_F$ if $(u,v)\in E$ (in this case $(v,u)\in E$ holds).
If $F$ contains two copies of $(u,v)$ then add $(u,v)$ to $A_F$ and $(v,u)$ to $D_F$;
note that by definition of~$F$ and~$G_S$, in this case $S=\{\ea,\ed\}, (u,v) \notin
E$ and $(v,u) \in E$.
Observe that the sets $A_F$ and $D_F$ are not multisets.
We need the following lemma, which consists of seven 
easy observations.

\begin{lemma}\label{l-weakly}
\begin{sloppypar}
Let $\{\ea\}\subseteq S \subseteq \{\ea,\ed\}$.  Let $(G,\delta)$ be an
instance of \cdbe$(S)$ where $G=(V,E)$.  Let $F$ be a minimum directed
$f$-join. The following statements hold.
\end{sloppypar}
\begin{enumerate}[(i)]
\renewcommand{\theenumi}{(\roman{enumi})}
\renewcommand\labelenumi{\theenumi}
\item \label{stat:uv-A-uv-not-E} If $(u,v)\in A_F$ then $(u,v)\notin E$.
\item \label{stat:uv-D-uv-E} If $(u,v)\in D_F$ then $(u,v)\in E$.
\item \label{stat:A-D-disjoint} $A_F\cap D_F=\emptyset$ and moreover, $(u,v)\in F$ if and only if $(u,v)\in A_F$ or $(v,u)\in D_F$. 
\item \label{stat:two-copies-F} There are two copies of $(u,v)$ in $F$ if and only if $(u,v)\in
A_F$ and $(v,u)\in\nobreak  D_F$.
\item \label{stat:no-ed-D-empty} If $S=\{\ea\}$, then $D_F=\emptyset$.
\item \label{stat:never-disconnect} If vertices $u$ and $v$ are joined by an arc in $G$ then they are
joined by an arc in $G+A_F-D_F$.
\item \label{stat:never-disconnect-changed} If $(u,v) \in F$ then $u$ and $v$ are connected by an arc in $G+A_F-D_F$.
\end{enumerate}
\end{lemma}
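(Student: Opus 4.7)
The plan is to verify each of the seven statements by direct inspection of the rules defining $A_F$ and $D_F$, using the minimality of $F$ (in the guise of the paper's prior observation that $F$ cannot contain both $(u,v)$ and $(v,u)$) only where needed. I would organise the proof so that (i)--(v) are settled first by local case analysis on how an arc enters $A_F$ or $D_F$, and (vi)--(vii) are then settled by combining these with the ``no reverse pair in $F$'' fact.

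First I would prove (i) and (ii) together. An arc enters $A_F$ either because $F$ contains a single copy of $(u,v)$ with $(u,v)\notin E$, or because $F$ contains two copies of $(u,v)$, which by construction requires $(u,v)\notin E$; in both cases $(u,v)\notin E$. Symmetrically, the only ways $(u,v)$ enters $D_F$ are by processing $(v,u)\in F$ once with $(v,u)\in E$ (whence the parenthetical remark in the construction gives $(u,v)\in E$), or by processing two copies of $(v,u)\in F$ (whence $(u,v)\in E$ by construction). Statement (iii) is then a one-line consequence: disjointness is immediate from (i) and (ii), and the ``iff'' is a direct case-split on the number of copies of $(u,v)$ in $F$ and on whether $(u,v)\in E$.

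Next, for (iv) the forward direction is exactly the two-copies rule. For the converse, assume $(u,v)\in A_F$ and $(v,u)\in D_F$. Both events are triggered by the same arc $(u,v)\in F$, since the minimality of $F$ forbids $(v,u)\in F$ to hold simultaneously. If $F$ contained $(u,v)$ only once, the single-copy rule would require both $(u,v)\notin E$ (to put $(u,v)\in A_F$) and $(u,v)\in E$ (to put $(v,u)\in D_F$), a contradiction; hence $F$ contains two copies of $(u,v)$. Statement (v) is immediate: if $S=\{\ea\}$ then every arc of $G_S$ is missing from $E$, so neither the ``$(u,v)\in E$'' branch nor the two-copies case ever fires.

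For (vi) and (vii) the main (and only) subtlety is bookkeeping: an arc of $G$ can be absent from $H=G+A_F-D_F$ only if it lies in $D_F$, and by (ii) this requires a corresponding arc of $F$. For (vi), suppose $(u,v)\in E$. If $(u,v)\notin D_F$ we are done; otherwise $(u,v)\in D_F$ is caused by some processing of $(v,u)\in F$, so by the minimality fact $(u,v)\notin F$, whence $(v,u)\notin D_F$, and either $(v,u)\in E$ already (single-copy subcase) or $(v,u)$ has been put into $A_F$ (two-copies subcase); in both subcases $(v,u)$ survives in $H$. For (vii), suppose $(u,v)\in F$; then $(v,u)\notin F$ by minimality, so neither $(u,v)\in D_F$ nor $(v,u)\in D_F$ can arise from processing $(v,u)\in F$. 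A short case check on the single-copy and two-copies rules then shows that one of $(u,v),(v,u)$ is placed into $A_F$ or is already in $E$ and not in $D_F$, giving an arc between $u$ and $v$ in $H$. The hard part, such as it is, is nothing more than being careful to distinguish arcs of $F$, of $E$, of $A_F$ and of $D_F$ throughout these two cases.
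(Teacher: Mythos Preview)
Your proposal is correct and follows essentially the same approach as the paper: direct verification from the construction rules for $A_F$ and $D_F$, invoking minimality of $F$ only via the ``no reverse pair in $F$'' observation for (vi) and (vii). The only cosmetic differences are that your argument for (v) is more direct than the paper's contradiction argument, and for (vii) you do a standalone case check whereas the paper reduces to (vi) via (ii) and (iii).
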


\begin{proof}
Statements~\ref{stat:uv-A-uv-not-E} and~\ref{stat:uv-D-uv-E} follow directly from the definitions of $A_F$ and
$D_F$, respectively.  The fact that $A_F\cap D_F=\emptyset$ follows directly
from Statements~\ref{stat:uv-A-uv-not-E} and~\ref{stat:uv-D-uv-E}. The second part of Statement~\ref{stat:A-D-disjoint} follows
directly from the definitions of $A_F$ and $D_F$. Statement~\ref{stat:two-copies-F} follows
directly from the definition of $A_F$ and~$D_F$.

To prove Statement~\ref{stat:no-ed-D-empty}, suppose for contradiction that $S=\{\ea\}$ and
$(u,v)\in D_F$. By Statement~\ref{stat:uv-D-uv-E}, $(u,v) \in E$.  
Since $S=\{\ea\}$, $F$ can contain at most one copy of $(v,u)$. By definition of $A_F$ and $D_F$, it follows that $(v,u) \in F$ and $(v,u) \in E$.
However, since $(u,v),(v,u) \in E$ and $S=\{\ea\}$, $(v,u)$ is not an arc in
$G_S$ by definition of $G_S$. Therefore $F$ cannot be an $f$-join in $G_S$,
which is a contradiction.

Next we consider Statement~\ref{stat:never-disconnect}. First suppose that $(u,v),(v,u) \in E$. If
$u$ and~$v$ are not connected by an arc in $G+A_F-D_F$, then $(u,v),(v,u) \in
D_F$. Then, by Statement~\ref{stat:A-D-disjoint}, $(v,u),(u,v) \in F$. However, as stated
earlier, this cannot happen, since $F$ is minimum. Now suppose $(u,v)\in E$ and
$(v,u)\notin E$. If $u$ and $v$ are not connected by an arc in $G+A_F-D_F$,
then $(u,v) \in D_F$. By Statement~\ref{stat:A-D-disjoint}, $(v,u) \in F$. Then $F$ must contain two copies of $(v,u)$, since $(v,u)\notin E$, so $(v,u) \in A_F$.
However in this case $u$ and $v$ are
connected by an arc in $G+A_F-D_F$. This completes the proof of Statement~\ref{stat:never-disconnect}.

Finally, we consider Statement~\ref{stat:never-disconnect-changed}. Suppose $(u,v) \in F$. If $(u,v) \in A_F$
then by Statement~\ref{stat:A-D-disjoint}, $(u,v)$ is an arc in $G+A_F-D_F$. Otherwise, by
Statement~\ref{stat:A-D-disjoint}, $(v,u) \in D_F$, so $(v,u)\in E$ by Statement~\ref{stat:uv-D-uv-E}. However,
in this case Statement~\ref{stat:never-disconnect} implies that $u$ and $v$ are connected by an arc in
$G+A_F-D_F$.
\end{proof}

If $X$ and $Y$ are sets, then $X \uplus Y$  is the multiset that consists of one
copy of each element that occurs in exactly one of $X$ and $Y$ and two copies
of each element that occurs in both.

The next lemma provides the starting point for our algorithm.

\begin{lemma}\label{l-onetoone}
\begin{sloppypar}
Let  $\{\ea\}\subseteq S \subseteq \{\ea,\ed\}$.  Let $(G,\delta)$ be an
instance of \cdbe$(S)$ where $G=(V,E)$. The following holds:
\end{sloppypar}
\begin{enumerate}[(i)]
\renewcommand{\theenumi}{(\roman{enumi})}
\renewcommand\labelenumi{\theenumi}
\item  \label{stat:f-join-is-semisolution} If $F$ is a minimum directed $f$-join in $G_S$, then $(A_F,D_F)$ is
a semi-solution for $(G,\delta)$ of size~$|F|$.
\item \label{stat:semisolution-is-f-join} If $(A,D)$ is a semi-solution for $(G,\delta)$, then $A\uplus D^R$
is a directed $f$-join in~$G_S$ of size $|A|+|D|$.
\end{enumerate}
\end{lemma}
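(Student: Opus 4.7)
The plan is to establish a single ``degree-shift identity'' linking directed $f$-joins in $G_S$ to modifications of $G$, and then derive both implications from it. First I would prove the identity: for any set $A$ of arcs not in $G$ and any set $D$ of arcs of $G$, setting $H = G + A - D$ and $J = A \uplus D^R$, one has
\[
(d_H^+(v) - d_H^-(v)) - (d_G^+(v) - d_G^-(v)) \;=\; d_{G_S(J)}^+(v) - d_{G_S(J)}^-(v)
\]
for every $v \in V$. The verification is a direct contribution count per arc: each $(u,v) \in A$ raises $d_H^+(u)$ and $d_H^-(v)$ by $1$, so shifts the left-hand side by $+1$ at $u$ and $-1$ at $v$, matching the contribution of $(u,v)$ as an arc of $G_S(J)$; each $(v,u) \in D$ lowers $d_H^+(v)$ and $d_H^-(u)$ by $1$, again shifting the left-hand side by $+1$ at $u$ and $-1$ at $v$, matching the contribution of the corresponding $(u,v) \in D^R$ in $G_S(J)$.

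For statement~\ref{stat:f-join-is-semisolution}, I apply the identity with $(A,D) = (A_F, D_F)$ and $J = F$. Parts~\ref{stat:uv-A-uv-not-E}, \ref{stat:uv-D-uv-E} and~\ref{stat:no-ed-D-empty} of Lemma~\ref{l-weakly} confirm that $A_F$ only adds non-arcs, $D_F$ only deletes arcs, and $D_F = \emptyset$ whenever $S = \{\ea\}$, so $(A_F, D_F)$ is a legitimate modification. The construction of $A_F$ and $D_F$ from $F$, together with parts~\ref{stat:A-D-disjoint} and~\ref{stat:two-copies-F} of Lemma~\ref{l-weakly}, shows that $F = A_F \uplus D_F^R$ as multisets, yielding $|F| = |A_F| + |D_F|$. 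Because $F$ is a directed $f$-join, the right-hand side of the identity equals $f(v) = \delta(v) - d_G^+(v) + d_G^-(v)$ for $v \in T$ and $0$ for $v \notin T$ (using that $v \notin T$ means $d_G^+(v) - d_G^-(v) = \delta(v)$ already). Rearranging gives $d_H^+(v) - d_H^-(v) = \delta(v)$ for every $v$, so $(A_F, D_F)$ is a semi-solution for $(G,\delta)$ of size $|F|$.

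For statement~\ref{stat:semisolution-is-f-join}, I reverse the route. Let $(A,D)$ be a semi-solution and set $J = A \uplus D^R$. Every $(u,v) \in A$ has $(u,v) \notin E$, so $(u,v)$ is an addition-type arc of $G_S$; every $(v,u) \in D$ has $(v,u) \in E$, so $(u,v) \in D^R$ is a deletion-type arc of $G_S$ (available because $D \ne \emptyset$ forces $S = \{\ea,\ed\}$); when both $(u,v) \in A$ and $(v,u) \in D$ hold, $G_S$ carries two parallel copies of $(u,v)$, so $J \subseteq E(G_S)$ even as a multiset. Substituting the semi-solution relation $d_H^+(v) - d_H^-(v) = \delta(v)$ into the identity gives $d_{G_S(J)}^+(v) - d_{G_S(J)}^-(v) = \delta(v) - d_G^+(v) + d_G^-(v)$, which equals $f(v)$ on $T$ and $0$ off $T$. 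The inclusion $T \subseteq V(G_S(J))$ follows because $f(v) \ne 0$ for $v \in T$ forces such a $v$ to be incident to some arc of $J$, and $|J| = |A| + |D|$ holds by construction.

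The only subtlety worth flagging as the main obstacle is the \emph{double-arc} case in which $(u,v) \notin E$ and $(v,u) \in E$: there $G_S$ has two parallel copies of $(u,v)$, a minimum $f$-join may use both copies simultaneously, and a semi-solution may have $(u,v) \in A$ and $(v,u) \in D$ simultaneously. Part~\ref{stat:two-copies-F} of Lemma~\ref{l-weakly}, together with the mirror observation that $(u,v) \in A$ and $(v,u) \in D$ place two copies of $(u,v)$ into $A \uplus D^R$, ensures that the multiset correspondence $F \leftrightarrow A_F \uplus D_F^R$ is exact in both directions; every remaining step is a routine bookkeeping computation.
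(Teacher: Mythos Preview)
Your proposal is correct and follows essentially the same route as the paper: both arguments are direct degree-balance bookkeeping showing that adding $(u,v)$ or deleting $(v,u)$ shifts $d_H^+-d_H^-$ exactly as the arc $(u,v)$ contributes to the in/out-degree difference in $G_S(F)$. Your only organisational difference is that you abstract the computation into a single ``degree-shift identity'' and then apply it in both directions (using Lemma~\ref{l-weakly}\ref{stat:A-D-disjoint},\ref{stat:two-copies-F} to get $F=A_F\uplus D_F^R$), whereas the paper carries out the two computations separately with ad~hoc notation $A^\pm(u),D^\pm(u)$; the content is the same.
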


\begin{proof}
First consider Statement~\ref{stat:f-join-is-semisolution}.  Suppose $F$ is a minimum directed $f$-join
in~$G_S$.  By Lemma~\ref{l-weakly}~\ref{stat:A-D-disjoint} and~\ref{stat:two-copies-F}, $(A_F,D_F)$ has size
$|A_F|+|D_F|=|F|$.

Let $H=G+A_F-D_F$.  Let $u\in V$.  Let $A^+(u)$ and $A^-(u)$ be the sets of
arcs in $F$ with~$u$ as tail or head, respectively, that were put into~$A_F$.
Let $D^+(u)$ and $D^-(u)$ be the set of arcs in $F$ with~$u$ as tail or head,
respectively, whose reverse was put into $D_F$.

Suppose $u \in V$. Define $d^+_{G_S(F)}(u)=d^-_{G_S(F)}(u)=0$ if $u$ is not in $G(F)$ and $f(u)=0$ if $u\notin T$.
Then by the
definition of a directed $f$-join, we have
\begin{align*}
\delta(u)-(d^+_G(u)-d^-_G(u)) &= f(u)\\[5pt]
&= d^+_{G_S(F)}(u)-d^-_{G_S(F)}(u)\\[5pt]
&= |A^+(u)|+|D^+(u)|-|A^-(u)| - |D^-(u)|.
\end{align*}
If $(u,v)\in A_F$ then $(u,v)\notin E$ by Lemma~\ref{l-weakly}~\ref{stat:uv-A-uv-not-E}.  If
$(u,v)\in D_F$ then $(u,v)\in\nobreak E$ by Lemma~\ref{l-weakly}~\ref{stat:uv-D-uv-E}. Moreover, in
that case, $(v,u)\in F$.  Consequently, we find that
\begin{gather*}
d^+_H(u)-d^-_H(u)\hspace*{22em}\\[5pt]
\begin{align*}
\hspace*{3em}&= d^+_G(u)-d^-_G(u)+|A^+(u)| - |A^-(u)|+ |D^+(u)| - |D^-(u)|\\[5pt]
&= d^+_G(u)-d^-_G(u)+\delta(u)-(d^+_G(u)-d^-_G(u))\\[5pt]
&= \delta(u).
\end{align*}
\end{gather*}
We conclude that $(A_F,D_F)$ is
a semi-solution for $(G,\delta)$. 

\medskip
\noindent
Now consider Statement~\ref{stat:semisolution-is-f-join}. Suppose $(A,D)$ is a semi-solution for $(G,\delta)$.  Let
$A^+(u)$ and $A^-(u)$ be the sets of arcs in $A$ with $u$ as tail or head,
respectively.  Let $D^+(u)$ and $D^-(u)$ be the set of arcs in $D$ with $u$ as
tail or head, respectively.  Let $H=G+A-D$. Let $u\in T$ (recall that $T$
consists of every vertex $u$ with $d^+_G(u)-d^-_G(u)\neq \delta(u)$).  Because
$(A,D)$ is a semi-solution, we have
\begin{gather*}
d^+_G(u)-d^-_G(u)+|A^+(u)|-|A^-(u)| - (|D^+(u)|-|D^-(u)|)\\[5pt]
\begin{align*}
\hspace*{5em}&= d^+_H(u)-d^-_H(u)\\[5pt]
&= d^+_G(u)-d^-_G(u)+\delta(u)-(d^+_G(u)-d^-_G(u))\\[5pt]
&= d^+_G(u)-d^-_G(u)+f(u),
\end{align*}
\end{gather*}
where we define $f(u)=0$ if $u\notin T$.  This leads to
\[f(u)=|A^+(u)|-|A^-(u)|-(|D^+(u)|-|D^-(u)|).\] Let $F=A\uplus D^R$.  Suppose
$(u,v)$ appears once in $F$.  Let $(u,v)\in A$. Then $(u,v)\notin E$. By
definition,~$G_S$ contains $(u,v)$.  Let $(u,v)\in D^R$. Then $S=\{\ea,\ed\}$, so $(v,u)\in E$. By
definition,~$G_S$ contains $(u,v)$. Suppose $(u,v)$ appears twice in $F$.
Then $(u,v)\in A$ and $(u,v)\in D^R$. Hence, $(u,v)\notin E$ and $(v,u)\in E$,
and moreover, $S=\{\ea,\ed\}$.  Then $(u,v)$ appears twice in $G_S$.  We
conclude that $F$ is a subset of the arcs in $G_S$.  Let $D^+(u)^R$ and
$D^-(u)^R$ be the set of arcs in $D^R$ with $u$ as tail or head, respectively.
Then $|D^+(u)^R|=|D^-(u)|$ and $|D^-(u)^R|=|D^+(u)|$. We find that, for all
$u\in V$,
\begin{align*}
d^+_{G_S(F)}(u)-d^+_{G_S(F)}(u) &= |A^+(u)|-|A^-(u)|+|D^+(u)^R|-|D^-(u)^R|\\[5pt]
&= |A^+(u)|-|A^-(u)|-(|D^+(u)|-|D^-(u)|)\\[5pt]
&= f(u).
\end{align*}
Hence, $F$ is a directed $f$-join.  It follows from the corresponding
definitions that the size of $(A,D)$ is $|A|+|D|=|A|+|D^R|=|A\uplus D^R|=|F|$.
This completes the proof of Lemma~\ref{l-onetoone}.
\end{proof}

Let $(G,\delta)$ be an instance of \cdbe$(S)$.  Let~$p=p_{(G,\delta)}$ be the
number of components of~$G$ that contain no vertex of~$T$.
Let~$q=q_{(G,\delta)}$ be the number of components of~$G$ that contain at least
one vertex of~$T$.
Let $t=\nobreak t_{(G,\delta)}=~\sum_{u\in T}|f(u)|$.

We now state the following lemma;
its proof is based on Lemmas~\ref{lem:dir-t-join},~\ref{l-weakly} and~\ref{l-onetoone}.

\begin{sloppypar}
\begin{lemma}\label{l-algo}
Let  $\{\ea\}\subseteq S \subseteq \{\ea,\ed\}$.  Let $(G,\delta)$ be an
instance of \cdbe$(S)$ with $q \geq 1$.  If~$F$ is a (given) minimum directed $f$-join in
$G_S$, then $(G,\delta)$ has a solution that has size at most
$\max\{|F|,p+q-1,p+\frac{1}{2}t\}$, which can be found in
$O(nm)$ time.
\end{lemma}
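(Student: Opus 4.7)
The plan is to convert the semi-solution $(A_F,D_F)$ supplied by Lemma~\ref{l-onetoone}(i) into an actual solution by rerouting arcs of~$F$ and, when necessary, augmenting with a few extra arcs whose in- and out-contributions cancel at every vertex. The argument parallels the proof of Theorem~\ref{thm:add-undi}, but has to be adapted to the path-decomposition of~$F$ guaranteed by Lemma~\ref{lem:dir-t-join} and to the subtler structure of the auxiliary digraph~$G_S$, which, unlike~$K_n$ in the undirected case, need not contain every potential arc.

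The first phase is an exhaustive \emph{uncrossing}. While there exist arcs $(u,v),(u',v')\in F$ on two distinct paths of the decomposition such that $(u,v)$ is not a bridge of the underlying graph of $H=G+A_F-D_F$, such that $u$ and $u'$ lie in different components of~$H$, and such that both $(u,v')$ and $(u',v)$ belong to~$G_S$, we replace $(u,v)$ and $(u',v')$ in~$F$ by $(u,v')$ and $(u',v)$. A direct in- and out-degree count at the four vertices involved shows that the updated $F$ is still a directed $f$-join of size~$|F|$; by Lemma~\ref{l-onetoone}(i) it lifts to a new semi-solution $(A_F,D_F)$; and by Lemma~\ref{l-weakly}(vi)--(vii) two components of the corresponding~$H$ merge. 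Since each iteration strictly reduces the number of components of~$H$, this phase terminates after $O(n)$ rounds, each implementable in $O(n+m)$ time by recomputing components.

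After termination, one of three scenarios must hold: (a)~$H$ is weakly connected, and $(A_F,D_F)$ is a solution of size~$|F|$; (b)~every arc of~$F$ is a bridge in~$H$, so~$H$ has $k=p+q-|F|$ components; or (c)~all arcs of~$F$ lie in a single component of~$H$, leaving the $p$ $T$-free components of~$G$ isolated. In case~(b), we pick $(u,v)\in F$ with $u,v$ in a common component~$Q_1$, choose a representative $v_i$ in each remaining component~$Q_i$, and replace $(u,v)$ in~$F$ by the directed path $u\to v_2\to v_3\to\cdots\to v_k\to v$. Each intermediate vertex gains exactly one in-arc and one out-arc, preserving the $f$-join property, and the resulting augmented semi-solution is connected and of size $|F|-1+k=p+q-1$.

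Case~(c) is the main obstacle and requires a finer exploitation of the minimality of~$F$. Minimality, together with the exhaustiveness of the uncrossing phase, should force the following rigidity: for any two consecutive arcs $(u,v),(v,w)$ on a path of the decomposition, the pair $\{(u,v),(v,w)\}$ is a cut-set of~$H$ whose side containing~$v$ is a $T$-free component of~$G$, for otherwise rerouting these two arcs through a $T$-vertex in that side would yield a strictly smaller $f$-join. This rigidity permits a decomposition of~$F$ into $t-|F|$ single-arc pieces and $|F|-\tfrac{1}{2}t$ two-arc pieces whose middle vertices lie in distinct $T$-free components; rerouting a single chosen arc to pass successively through the remaining $p-(|F|-\tfrac{1}{2}t)$ untouched $T$-free components yields a connected digraph and a solution of size $|F|+p-(|F|-\tfrac{1}{2}t)=p+\tfrac{1}{2}t$. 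Throughout, the most delicate technical point is to verify that each new arc used in these reroutings actually belongs to~$G_S$, which requires a careful case analysis depending on whether the relevant ordered pair lies in~$E$. All these operations can be executed in $O(n+m)$ time each and there are $O(n)$ of them, for an overall running time of $O(nm)$.
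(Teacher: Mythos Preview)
Your overall architecture mirrors the paper's, but two steps do not go through as written.

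\textbf{The uncrossing phase misses the deletion case.} Your rule fires only when $(u,v)\in F$ is not a bridge of~$H$. The paper uncrosses under a second, independent trigger: whenever $(v,u)\in D_F$. The point is that if $(u,v)\in E$ and $(v,u)\in D_F$, then $(u,v)$ can perfectly well be a bridge of~$H$, yet swapping it with an arc in another component still merges two components, because removing $(u,v)$ from~$F$ simply restores $(v,u)$ to~$H$ and keeps $u,v$ adjacent. Without this trigger you can terminate in your scenario~(b) with $D_F\neq\emptyset$, and then your component count $k=p+q-|F|$ is false: arcs of~$F$ that land in~$D_F$ never merge components of~$G$, so the correct count is $k=p+q-|A_F|$ and the rerouted solution has size $|F|-1+k=p+q-1+|D_F|$, which can exceed the bound. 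A concrete failure: let $S=\{\ea,\ed\}$, let $G$ consist of two components $\{u_i,v_i\}$ with both arcs $(u_i,v_i),(v_i,u_i)$ present and $f(u_i)=1$, $f(v_i)=-1$ for $i=1,2$, and take the given minimum $f$-join $F=\{(u_1,v_1),(u_2,v_2)\}$. Then $A_F=\emptyset$, $D_F=\{(v_1,u_1),(v_2,u_2)\}$, $H$ has two components, both arcs of~$F$ are bridges in~$H$, and your uncrossing does nothing. Here $p=0$, $q=2$, $t=4$, $|F|=2$, so the promised bound is~$2$, but your scenario~(b) yields a solution of size~$3$ (and your formula $k=p+q-|F|=0$ even asserts $H$ is already connected).

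\textbf{Scenario~(c) skips a whole phase.} You try to derive the cut-set rigidity for length-$2$ subpaths $(u,v),(v,w)$ directly from minimality of~$F$. Minimality does show that the $v$-side of a genuine cut carries no other arc of~$F$ (rerouting would shrink~$F$), but it does \emph{not} show that $\{(u,v),(v,w)\}$ is a cut in the first place. The paper obtains this via a \emph{second} exhaustive replacement: whenever $u$ and $v$ remain in the same component of $H$ after removing $(u,v)$ and $(v,w)$, reroute both arcs through a vertex~$x$ in a different component of~$H$. This keeps $|F|$ fixed but merges two components; only after it terminates is the cut-set property guaranteed. You also use, but never prove, that all paths in the decomposition have length at most~$2$; the paper establishes this at the outset from minimality together with the fact that $H$ is disconnected.
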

\end{sloppypar}

\begin{proof}
Let $F$ be a minimum directed $f$-join in $G_S$. If $H=G+A_F-\nobreak D_F$ is connected,
then the statement of the theorem holds by Lemma~\ref{l-onetoone}.  Suppose~$H$
is not connected.  We will try to replace arcs in $F$ to obtain a different
minimum directed $f$-join~$F'$ such that $H'=G+A_{F'}-D_{F'}$ will have fewer
components. Either this will eventually cause the graph to be connected (in
which case the corresponding solution will still have size~$|F|$), or else the
structure of this directed $f$-join will enable us to find a solution for
\cdbe($S$) of size either $p+q-1$ or $p+\frac{1}{2}t$. Our changes to $F$ will
be such that no additional arcs are ever added to the corresponding set $D_F$.
Thus, if $S=\{\ea\}$, then the property $D_F=\emptyset$ will be preserved.

By Lemma~\ref{lem:dir-t-join}, $G_S(F)$ must only consist of mutually
arc-disjoint directed paths from vertices~$u$ with $f(u)>0$ to vertices $v$
with $f(v)<0$. We claim that all such paths must be of length at most~2.
Suppose, for contradiction, that there is a directed path of length at least~3
in $G_S(F)$ from some vertex~$u$ to some vertex~$v$. Note that $u$ and $v$ are
in the same component of $H$. Since $H$ is not  connected, there must be a
vertex $x$ in some other component of~$H$. By Lemma~\ref{l-weakly}~\ref{stat:never-disconnect}, this
means that $x$ is not in the same component of $G$ as~$u$ or $v$, so $(u,x)$
and $(x,v)$ are arcs in $G_S$.  Replacing the directed path from $u$ to $v$ in
$F$ by the arcs $(u,x),(x,v)$ would yield a smaller directed $f$-join in $G_S$,
which is a contradiction.  Therefore all directed paths in $G_S(F)$ must be of
length at most~2. 

Let $(u,v)$ and $(u',v')$ be arcs in $F$.  Note that by
Lemma~\ref{l-weakly}~\ref{stat:never-disconnect-changed}, $u$ and $v$ are in the same component of $H$ and
$u'$ and $v'$ are in the same component of $H$.  Suppose that $(u,v)$ and
$(u',v')$ are chosen such that $u$ and $v$ are in a different component of $H$
from the one containing $u'$  and~$v'$ and that one of the following situations
holds:
\begin{enumerate}[(i)]
\renewcommand{\theenumi}{(\roman{enumi})}
\renewcommand\labelenumi{\theenumi}
\item \label{case:uv-in-A-no-bridge} either $(u,v) \in A_F$ and $(u,v)$ is not a bridge in $H$, or
\item \label{case:vu-in-D} $(v,u) \in D_F$.
\end{enumerate}
By Lemma~\ref{l-weakly}~\ref{stat:never-disconnect}, vertex $u$ is not in the same component of $G$ as
$v'$ and vertex~$v$ is not in the same component of $G$ as $u'$. Hence, by the
definition of $G_S$, the arcs $(u,v')$ and $(u',v)$ are in $G_S$.  As such, we
may replace $(u,v)$ and $(u',v')$ in $F$ by $(u,v')$ and $(u',v)$. This yields
another minimum directed $f$-join in $G_S$ which, as we explain below, reduces
the number of components in $H$ by one.  Because~$u$ and $v$ are not in the
same components of $G$ as $u'$ or $v'$, adding $(u,v)$ and $(u',v')$  to $F$
means that these two arcs will be put into $A_F$.  Suppose~\ref{case:uv-in-A-no-bridge} holds. Then the
vertices in the original component of~$H$ that contained $u$ and $v$ will still
be connected, whereas the vertices in the original component of~$H$ that
contained~$u'$ and $v'$ will still be connected as well (if necessary via a path
that uses the new arcs $(u,v)$ and $(u',v')$). Thus, $H$ has one component
less.  Suppose~\ref{case:vu-in-D} holds.  Then removing $(v,u)$ from~$F$ means removing it
from $D_F$. Hence, in $H$, the arc $(v,u)$ is restored and we can apply the
same arguments.

We apply the above replacement operation exhaustively. At termination, we have
modified~$F$ into a minimum directed $f$-join of $G_S$, in which either every
arc in $A_F$ will be a bridge in $H$ and $D_F=\emptyset$, or the end-vertices
of every arc in $F$ will all be in the same component of~$H$.  We discuss these
two cases separately.

\medskip
\noindent
\displaycase{Case 1:} {\em Every arc in $A_F$ is a bridge in $H$ and $D_F=\emptyset$}.\\ 
Then $F=A_F$. We claim that every directed path in $G_S(F)$ has length~1.  For
contradiction, suppose $(u,v)$ and $(v,w)$ are two arcs in $F$.  Since both
$(u,v)$ and $(v,w)$ are bridges in $H$, we must have that $(u,w)$ is not an arc
in $H$.  Then replacing $(u,v)$ and $(v,w)$ in $F$ by $(u,w)$ would yield a
smaller directed $f$-join in $G_S$, which would contradict the minimality of
$F$.

As every directed path in $G_S(F)$ has length~1, every arc $(u,v) \in F$ must
be such that $f(u)>0$ and $f(v)<0$.  Hence, $F=A_F$ contains exactly $\frac{1}{2}t$
arcs.

Let $H_1,\ldots,H_k$ be the components of~$H$. Because every arc in $A_F$ is a
bridge in $H$ and $D_F=\emptyset$, we find that $k=p+q-\frac{1}{2}t$.  Suppose
$k=1$. Then~$H$ is connected, so $p=0$.  Hence we
have a solution for \cdbe$(S)$ that uses $p+\frac{1}{2}t$ arcs.  Suppose $k\geq 2$.
Choose an arc  $(u,v) \in A_F$ arbitrarily and assume without loss of
generality that $u$ and $v$ are in~$H_1$. Next, choose a vertex $v_i$ in $H_i$
for $i\in\{2,\ldots,k\}$.  Replace the arc $(u,v)$ in $A_F$ by the arcs
$(u,v_2),(v_2,v_3),\ldots,(v_{k-1},v_k),(v_k,v)$.  This gives a solution for
\cdbe$(S)$ that uses $\frac{1}{2}t-1+k=\frac{1}{2}t-1+p+q-\frac{1}{2}t=p+q-1$
arcs. 

\medskip
\noindent
\displaycase{Case 2:} {\em The end-vertices of each arc in $A_F\cup D_F$ are all in the same component~of~$H$}.\\  
Suppose $H$ has at least one other component; let $x$ be a vertex in such a
component.  Suppose that $(u,v)$ and $(v,w)$ are two distinct arcs in $F$ such
that the following situation holds:~$u$ and~$v$ are in the same component of
the graph obtained from $H$ after removing $(u,v)$ and $(v,w)$.  Because $F$ is
a minimum directed $f$-join, $u$ and $w$ are distinct vertices.  By
Lemma~\ref{l-weakly}~\ref{stat:never-disconnect}, vertices $u$ and $w$ are not in the component of $G$
that contains $x$.  Hence, by the definition of~$G_S$, the arcs $(u,x)$ and
$(x,w)$ are in $G_S$.  As such, we may replace $(u,v)$ and $(v,w)$ in~$F$ by
$(u,x)$ and $(x,w)$.  This yields another minimum directed $f$-join in $G_S$
which, as we explain below, reduces the number of components in $H$ by one.

Because $u$ and $w$ are not in the component of $G$ that contains $x$, we find
that $(u,x)$ and $(x,w)$ will be put into $A_F$. Because $F$ is a minimum
directed $f$-join, $(u,w)$ must be in $H$ already, so $(u,w) \in E$ or $(u,w)
\in F$. By Lemma~\ref{l-weakly}~\ref{stat:never-disconnect} and~\ref{stat:never-disconnect-changed},~$u$ and $w$ are still in the
same component after our replacement.  Consequently, all vertices~$u,v,w,x$
will be in the same  component.  Hence, the number of components in $H$ is
reduced by one.

We apply the above replacement operation exhaustively.  If $H$ becomes
connected, then since $F$ is (still) a minimum directed $f$-join, we have found
a solution of size $|F|$.  Assume $H$ does not become connected. Then, at
termination of our procedure, we have obtained the following situation.  For
every two distinct arcs $(u,v)$ and $(v,w)$, we have that $u$ and $v$ are in
different components of the graph~$H'$ obtained from $H$ after removing $(u,v)$
and $(v,w)$. Moreover, $w$ is in the same component of $H'$ as~$u$ (by our
earlier arguments, we have that $(u,w)\in H$).

Let $H'_v$ be the component of~$H'$ that contains~$v$.
We claim that $(u,v)\in A_F$ and $(v,w)\in A_F$, and that $H'_v$ contains no vertices incident to arcs in $F \setminus
\{(u,v),(v,w)\}$. This can be seen as follows.  Because~$H'_v$ does not contain $u$ or $w$, we find that $(u,v)$ and $(v,w)$ are both
in $A_F$ due to Lemma~\ref{l-weakly}~\ref{stat:never-disconnect-changed}. If~$H'_v$
contains a vertex incident to some arc in $F\setminus
\{(u,v),(v,w)\}$, then this component must also contain the other end-vertex of
this arc by Lemma~\ref{l-weakly}~\ref{stat:never-disconnect-changed}. Suppose $u',v'$ are in~$H'_v$ and $(u',v')\in F\setminus \{(u,v),(v,w)\}$. (Note
that we do not insist that $u'\neq v$ or $v' \neq\nobreak v$.) Then we find a smaller
directed $f$-join of $G_S$ by replacing $(u,v)$, $(v,w)$ and $(u',v')$ in~$F$
by the arcs $(u,v')$ and $(u',w)$ (which are not in $F\setminus\{(u,v),
(v,w)\}$ already due to Lemma~\ref{l-weakly}~\ref{stat:never-disconnect}). This contradicts the
minimality of $F$.

We now do as follows.  Recall that every directed path in $F$ has length at
most~2.  Hence, we can partition $F$ into $r$ arcs $(u,w)$ with $f(u)>\nobreak 0$ and
$f(w)<0$ and $\frac{1}{2}t-r$ pairs of arcs $(u,v),(v,w)$ with $f(u)>0$ and
$f(w)<\nobreak 0$.  We deduced above that every directed path $(u,v)$, $(v,w)$ reduces
the number of components in $H$ by one.  Hence, the number of components in~$H$
is $1+p-(\frac{1}{2}t-r)$.  

Let $G_1,\ldots,G_k$ be the components of~$H$ that do not contain any
vertex~$v$ with $f(v) \neq\nobreak 0$. Note that $k=p-(\frac{1}{2}t-r)$.
Because~$H$ is not connected and every vertex $v$ with
$f(v)\neq 0$ belongs to the same component of~$H$, we find that $k\geq 1$.
Choose an arbitrary arc $(u,v)$ from~$F$ and for $i\in \{1,\ldots,k\}$,  choose
an arbitrary vertex $v_i$ in $G_i$.  Remove $(u,v)$ from $H$ if $(u,v)\in A_F$
or add $(v,u)$ to $H$  otherwise (by Lemma~\ref{l-weakly}~\ref{stat:A-D-disjoint} $(v,u)\in D_F$
if $(u,v)\notin A_F$).  Add the arcs
$(u,v_1),(v_1,v_2),\ldots,(v_{k-1},v_k),(v_k,v)$ to~$A_F$.  This gives a
solution for \cdbe$(\ea)$ that uses
$r+2(\frac{1}{2}t-r)+p-(\frac{1}{2}t-r)=p+\frac{1}{2}t$ arcs.

\medskip
\noindent
It is readily seen that all steps in the algorithm described above cost $O(nm)$
time. 
This completes the proof of Lemma~\ref{l-algo}.
\end{proof}

\noindent
The next result is our first main result of this section. 
We prove it by showing that the upper bound in Lemma~\ref{l-algo} is also a lower bound for (almost) any instance of \cdbe$(S)$ with $\{\ea\}\subseteq S \subseteq \{\ea,\ed
\}$ that has a semi-solution.

\begin{theorem}\label{thm:edit-dir}
For $\{\ea\}\subseteq S \subseteq \{\ea,\ed\}$,  \cdbe$(S)$ can be
solved in time $O(n^3\log n \log \log n)$.
\end{theorem}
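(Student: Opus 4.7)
My plan is to compute a minimum directed $f$-join $F$ in $G_S$ via Lemma~\ref{lem:dir-t-join}, then invoke Lemma~\ref{l-algo} to produce a solution whose size I will show equals $\opts(G,\delta)$. The generic case is $q\geq 1$: I would run Lemma~\ref{lem:dir-t-join} on $(G_S,f)$ in time $O(nm\log n\log\log m)=O(n^3\log n\log\log n)$, since $m\leq n(n-1)$. If no $f$-join exists then, by Lemma~\ref{l-onetoone}\ref{stat:semisolution-is-f-join}, the instance admits no semi-solution and hence no solution, so I would return ``no''. Otherwise, I would invoke Lemma~\ref{l-algo} to construct, in an additional $O(nm)$ time, a solution $(A,D)$ of size $\max\{|F|,\,p+q-1,\,p+\tfrac{1}{2}t\}$.

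To show this value equals $\opts(G,\delta)$ for $q\geq 1$, I would establish three matching lower bounds for an arbitrary solution $(A,D)$, writing $F'=A\uplus D^R$. First, by Lemma~\ref{l-onetoone}\ref{stat:semisolution-is-f-join} $F'$ is a directed $f$-join in $G_S$ of size $|A|+|D|$, so $|A|+|D|\geq|F|$. Second, since deleting arcs never merges components of $G$, connecting the $p+q$ components of $G$ requires $|A|\geq p+q-1$. Third, for every $v\in T$ the $f$-join condition yields $d^+_{G_S(F')}(v)+d^-_{G_S(F')}(v)\geq|f(v)|$; and for each of the $p$ components $G_i$ of $G$ that contain no vertex of $T$, the connectivity of $G+A-D$ forces at least one arc of $A$ to have exactly one endpoint in $V(G_i)$, while $f\equiv 0$ on $V(G_i)$ forces the total number of $F'$-endpoints in $V(G_i)$ to be even and hence at least $2$. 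Summing over $T$ and over the $p$ components gives $2|F'|\geq t+2p$, so $|A|+|D|\geq p+\tfrac{1}{2}t$.

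The remaining case $q=0$ has $T=\emptyset$ and $t=0$. If $p\leq 1$ then $G$ is already connected with every degree balance correct and $\opts=0$. Otherwise $p\geq 2$; picking a vertex $v_i$ in each component and adding the $p$ arcs of the directed cycle $v_1\to v_2\to\cdots\to v_p\to v_1$ (all missing from $G$ and balance-preserving) yields a solution of size $p$, and the endpoint-counting argument from the third bound above, applied to any solution, yields the matching lower bound $\opts\geq p$. The total running time is dominated by the $f$-join call, giving $O(n^3\log n\log\log n)$. The main obstacle is the third lower bound: correctly combining the imbalance contribution at $T$-vertices with the parity-plus-connectivity contribution at each of the $p$ components demands a careful endpoint count, and one must verify the formula stays tight across degenerate configurations (e.g.\ $q=1$ with a single imbalanced component, or $p=0$) as well as check the necessary condition $\sum_{v\in V}\delta(v)=0$ before calling Lemma~\ref{lem:dir-t-join}.
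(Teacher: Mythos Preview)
Your proposal is correct and follows essentially the same approach as the paper: compute a minimum directed $f$-join via Lemma~\ref{lem:dir-t-join}, invoke Lemma~\ref{l-algo} for the upper bound when $q\geq 1$, handle $q=0$ directly with a cycle, and establish the three matching lower bounds. Your phrasing of the third lower bound via endpoint-counting in $F'=A\uplus D^R$ is slightly more uniform than the paper's direct argument on $(A,D)$, but the underlying idea is identical. One minor quibble: Lemma~\ref{l-algo} only guarantees a solution of size \emph{at most} $\max\{|F|,p+q-1,p+\tfrac12 t\}$, not exactly that value; however, combined with your lower bounds this still yields $\opts(G,\delta)=\max\{|F|,p+q-1,p+\tfrac12 t\}$ and optimality of the returned solution, so the argument is sound.
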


\begin{proof}
\begin{sloppypar}
Let  $\{\ea\}\subseteq S \subseteq \{\ea,\ed\}$, and let $(G,\delta)$ be an
instance of \cdbe$(S)$.  We first use Lemma~\ref{lem:dir-t-join} to check
\end{sloppypar}
whether $G_S$ has a directed $f$-join.  Because $G_S$ has at most $2n^2$ arcs,
this takes $O(n^3\log n \log \log n)$ time.  If $G_S$ has no directed $f$-join
then $(G,\delta)$ has no semi-solution by Lemma~\ref{l-onetoone}, and thus no
solution either.  
Assume that $G_S$ has a directed $f$-join, and let $F$ be a minimum directed
$f$-join that can be found in time $O(n^3\log n \log \log n)$ by Lemma~\ref{lem:dir-t-join}. 
As before,~$p$ denotes the
number of components of~$G$ that do not contain any vertex of~$T$, while
$q$ is the number of components of~$G$ that contain at least
one vertex of~$T$, 
and $t=\sum_{u\in T}|f(u)|$. 

We will prove the following series of statements.

\begin{itemize}
\item $\opts(G,\delta)=0$ if $p \leq 1$, $q=0$,
\item $\opts(G,\delta)=p$ if $p \geq 2$, $q=0$,
\item $\opts(G,\delta)=\max(|F|,p+q-1,p+\frac{1}{2}t)$ if $q >0$.
\end{itemize}

If $p \leq1$ and $q=0$ then $A=D=\emptyset$ is an optimal solution.  If
$p\geq 2$ and $q=0$, to ensure connectivity and preserve degree balance, for
every component of~$G$ there must be at least one arc whose head is in this
component and at least one arc whose tail is in this component, thus any
solution must contain at least~$p$ arcs. Let $G_1,\ldots,G_p$ be the components
of $G$ and arbitrarily choose vertices $v_i \in V(G_i)$ for $i\in
\{1,\ldots,p\}$. Let $A=\{(v_1,v_2),(v_2,v_3),\ldots,(v_{p-1},v_p),(v_p,v_1)\}$
and $D=\emptyset$.  Then $(A,D)$ is a solution which has size~$p$ and is
therefore optimal.

Suppose $q\geq 1$.  By Lemma~\ref{l-algo} we find a solution $(A,D)$ for
$(G,\delta)$ of size at most $\max\{|F|,\allowbreak p+\nobreak q-\nobreak 1,\allowbreak p+\nobreak \frac{1}{2}t\}$ in $O(nm)$
time.  Hence, the total running time is $O(n^3\log n \log \log n)$, and it
remains to show that any solution has size at least
$\max(|F|,p+q-1,p+\frac{1}{2}t)$. 

Let $(A,D)$ be an arbitrary solution. Then $(A,D)$ is also semi-solution. Every
semi-solution has size at least~$|F|$ by Lemma~\ref{l-onetoone}~\ref{stat:semisolution-is-f-join}. Therefore
$(A,D)$ has size at least $|F|$.

Since there are $p+q$ components in $G$, we must add at least $p+q-1$ arcs to
ensure $G+A-D$ is connected.  Therefore $(A,D)$ has size at least $p+q-1$. 

Finally, for every vertex $u$ with $f(u)>0$
(resp. $f(u)<0$) we find that $(A,D)$ must be such that at least $|f(u)|$ arcs
are either in $A$ and have $u$ as a tail (resp. head) or else are in~$D$ and
have $u$ as a head (resp. tail). For every component containing only vertices
$v$ with $f(v)=0$, there must be at least one arc in $A$ whose head is in this
component and at least one arc in $A$ whose tail is in this component (to
ensure connectivity and to ensure that the degree balance is not changed for any
vertex in this component).  Therefore we have that $(A,D)$ has size at least
$p+\frac{1}{2}t$. This completes the proof of Theorem~\ref{thm:edit-dir}.
\end{proof}

\subsection{The \W-Hard Cases}\label{s-wund}

Recall that Cygan et al.~\cite{CyganMPPS14} proved that \cdbe($\{\vd\}$) is
\NP-complete and \W-hard when parameterized by $k$, even when
$\delta\equiv 0$. Our next results shows that this remains true if we allow not
only vertex deletions, but also edge deletions and/or edge additions.

\begin{sloppypar}
\begin{theorem}\label{thm:vertex-dir}
Let $\{\vd\} \subseteq  S\subseteq \{\vd,\ed,\ea\}$. Then \cdbe$(S)$ is
\NP-complete and \W-hard when parameterized by $k$, even if
$\delta\equiv 0$.
\end{theorem}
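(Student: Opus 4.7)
The problem clearly lies in \NP, so I would concentrate on hardness. I would reduce from \textsc{Odd-Sized Odd Set}, whose \NP-hardness and \W-hardness parameterized by $k$ are given by Lemma~\ref{l-oddsized}. The reduction parallels the one used for Theorem~\ref{thm:vertex-undir}, with ``odd degree'' systematically replaced by ``non-zero in-/out-degree balance''.

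Given an instance $(G,k)$ of \textsc{Odd-Sized Odd Set} with $G=(\mathcal{R},\mathcal{B},E)$ (and the harmless assumption $|\mathcal{R}|\ge 2$ with every vertex of $\mathcal{R}$ having at least one neighbour in $\mathcal{B}$), I would build a digraph $G^*$ on exactly the same vertex/edge skeleton used in the proof of Theorem~\ref{thm:vertex-undir}: two copies $\mathcal{B}_1,\mathcal{B}_2$ of $\mathcal{B}$, $k$ copies $\mathcal{R}_1,\ldots,\mathcal{R}_k$ of $\mathcal{R}$, for each $i$ a fresh set $\mathcal{X}_i$ of $2(k+1)$ vertices fully joined to $\mathcal{R}_i$, two hub vertices $y_1,y_2$ joined to all of $\mathcal{B}_1\cup\mathcal{B}_2$, and an edge between the two copies of every $b\in\mathcal{B}$ of even degree in $G$. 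I would set $\delta\equiv 0$ and orient every edge. The task is to choose the orientations so that (a) every vertex outside $\bigcup_i\mathcal{X}_i$ is already balanced in $G^*$ and every $x\in\bigcup_i\mathcal{X}_i$ has balance $\pm 1$; (b) for every $i$, deleting \emph{any} single vertex of $\mathcal{R}_i$ simultaneously restores balance at every vertex of $\mathcal{X}_i$; and (c) for any choice of one deletion per $\mathcal{R}_i$, encoding a candidate set $R\subseteq\mathcal{R}$, every copy of $b\in\mathcal{B}$ ends up balanced if and only if $|N_G(b)\cap R|$ is odd, while no other vertex is disturbed.

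Assuming such an orientation is in place, correctness is a routine transcription of the undirected argument. Forward direction: given an odd set $R$ of odd size $\le k$, I delete the $k$ corresponding copies, re-using copies of $r_1$ when $|R|<k$ exactly as in the proof of Theorem~\ref{thm:vertex-undir}; items (b) and (c) then yield a balanced digraph, and the hubs $y_1,y_2$ preserve weak connectivity. Reverse direction: any solution $L$ of size $\le k$ must in particular repair the balance of every vertex of every $\mathcal{X}_i$. Each arc addition or deletion shifts the balance of exactly two vertices, and a vertex deletion inside $\mathcal{X}_i$ repairs only one vertex of $\mathcal{X}_i$. Since $|\mathcal{X}_i|=2(k+1)>2k$, no collection of $\le k$ such operations suffices, and, by (b), the only remaining option is to delete exactly one vertex of $\mathcal{R}_i$. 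Hence $L$ consists of precisely $k$ vertex deletions, one from each $\mathcal{R}_i$; the induced subset of $\mathcal{R}$ has odd size $k$, and (c) forces it to be an odd set of $G$.

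The main obstacle is designing the orientation so that (a), (b) and (c) are simultaneously realizable. A parity tension arises because every $r\in\mathcal{R}_i$ has the even number $2(k+1)$ of neighbours in $\mathcal{X}_i$, while every $x\in\mathcal{X}_i$ has the odd number $|\mathcal{R}|$ of neighbours in $\mathcal{R}_i$, so a naive symmetric orientation is impossible. I would resolve this by partitioning each $\mathcal{X}_i$ into two equal halves carrying opposite $\pm 1$ imbalances, orienting the $\mathcal{R}_i$--$\mathcal{X}_i$ arcs symmetrically with respect to this partition, and using a bounded number of compensating arcs between $\mathcal{R}_i$ and $\{y_1,y_2\}$ to absorb the residual $\mathcal{R}_i$-imbalance. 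The $\mathcal{B}$--$\mathcal{R}^*$ arcs together with the ``parity-matching'' arcs between the two copies of each $b$ (present exactly when $d_G(b)$ is even) can then be oriented so that the net balance shift acquired by each $\mathcal{B}$-copy under a deletion set $R$ is nonzero precisely when $|N_G(b)\cap R|$ is odd, mirroring the role played by degree parity in the undirected construction. Once this arithmetic bookkeeping is settled, the remaining combinatorial forcing argument transfers essentially verbatim from Theorem~\ref{thm:vertex-undir}.
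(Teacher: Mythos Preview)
Your proposal has a genuine gap: conditions (a) and (b) cannot be realized simultaneously on the skeleton you describe. Fix any $x\in\mathcal{X}_i$ with nonzero balance $c_x\in\{+1,-1\}$. For (b) to hold, deleting \emph{any} $r\in\mathcal{R}_i$ must change the balance of $x$ by exactly $-c_x$; this forces every arc between $x$ and $\mathcal{R}_i$ to have the same orientation. But then the contribution of these $|\mathcal{R}|$ arcs to the balance of $x$ is $\pm|\mathcal{R}|$, and since on your skeleton $x$ has no other neighbours, its balance is $\pm|\mathcal{R}|$, not $\pm 1$. Your compensating arcs are between $\mathcal{R}_i$ and $\{y_1,y_2\}$ and therefore do nothing for $x$. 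Relaxing (a) to allow balance $\pm|\mathcal{R}|$ at $x$ does not help either, since then a single deletion from $\mathcal{R}_i$ leaves $x$ with balance $\pm(|\mathcal{R}|-1)\neq 0$, destroying (b). Condition (c) is similarly problematic: once (a) declares the $\mathcal{B}$-copies balanced, the balance change at a copy of $b$ under a deletion set is an integer sum of $\pm 1$'s that depends on \emph{which} copies of each $r_j$ are deleted, not merely on parities, so there is no orientation that makes ``final balance $=0$'' equivalent to ``$|N_G(b)\cap R|$ is odd'' for all deletion patterns. The undirected argument works precisely because degree parity lives in $\mathbb{Z}/2\mathbb{Z}$; degree balance lives in $\mathbb{Z}$, and the parity information simply is not recoverable from it in this way.

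The paper avoids these difficulties by taking a completely different route: it reduces from \textsc{Directed Balanced Node Deletion} (shown \NP-hard and \W-hard in~\cite{CyganMPPS14}), which is already a balance problem. Given $(G,k)$, one attaches $2k+1$ identical gadgets, each containing two distinguished vertices $a_i,b_i$ with $d^-(a_i)-d^+(a_i)=k$ and $d^+(b_i)-d^-(b_i)=k$, whose in-/out-neighbours outside the gadget are exactly the vertices of $V(G)$ together with $k$ fresh isolated vertices. Since $|L|\le k$, at least one gadget is untouched by arc operations and internal vertex deletions; balancing its $a_j$ and $b_j$ then forces the deletion of exactly $k$ vertices from $V(G)\cup\{v_1,\dots,v_k\}$, after which the remaining balance constraints on $V(G)$ reproduce the original instance. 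The key point is that the forcing here is a \emph{counting} argument in $\mathbb{Z}$ (imbalance exactly $k$), not a parity argument, which is why it transfers cleanly to the directed setting.
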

\end{sloppypar}

\begin{proof}
Let $\{\vd\} \subseteq  S\subseteq \{\vd,\ed,\ea\}$. The \cdbe($S$) problem
trivially belongs to \NP. To prove hardness, we describe a parameterized
reduction from {\sc Directed Balanced Node Deletion}. This problem takes as
input a digraph $G$ and an integer $k>0$, and asks whether there exists a
set~$A$ of at most $k$ vertices whose deletion yields a balanced digraph. This
problem is known to be \NP-complete and \W-hard with
parameter~$k$~\cite{CyganMPPS14}.  

Let $(G,k)$ be an instance of {\sc Directed Balanced Node Deletion}, and let
$n=|V(G)|$. We construct a digraph $G'$ as follows. We start with a copy of
$G$, where for every $v\in V(G)$, we write~$v'$ to denote the copy of $v$ in
$G'$. Let $V'=\{v' \mid v\in V(G)\}$. We add $k$ isolated vertices
$v_1,\ldots,v_k$. 
For each $i\in \{1,\ldots,2k+1\}$, we construct a gadget $G_i$ consisting of
vertices $a_i,b_i,x_i^1,\ldots,x_i^{n}$ and arcs $(a_i,x_i^j)$ and
$(x_i^j,b_i)$ for every $j\in \{1,\ldots,n\}$.  We make every vertex $v\in
V'\cup \{v_1,\ldots,v_k\}$ adjacent to each of the gadgets by adding arcs
$(v,a_i)$ and $(b_i,v)$ for every $i\in \{1,\ldots,2k+1\}$. This completes the
construction of $G'$. We define a function $\delta:V(G')\rightarrow \mathbb{Z}$
by setting $\delta(v)=0$ for every $v\in V(G')$.

We claim that $(G',k,\delta)$ is a yes-instance of \cdbe($S$) if and only if
$(G,k)$ is a yes-instance of {\sc Directed Balanced Node Deletion}.

First suppose $(G,k)$ is a yes-instance of {\sc Directed Balanced Node
Deletion}. Then there is a set $A\subseteq V(G)$ of size at most $k$ such
that $G-A$ is balanced. We define a set $A'\subseteq V(G')$ of size~$k$ as
follows. If $|A|=k$, then we set $A'=\{a' \mid a \in A\}$. If $|A|<k$, then we set $A'=\{a' \mid a \in A\}\cup
\{v_1,\ldots,v_{k-|A|}\}$. We claim that $G'-A'$ is Eulerian. Since the gadgets
are connected and every vertex outside the gadgets is adjacent to each of the
gadgets, it is clear that $G'-A'$ is connected. It remains to show that every
vertex in $G'-A'$ is balanced. In~$G'$, the in- and out-degrees of each vertex
$a_i$ 
equal $n+k$ and $n$, respectively, while the in- and out-degrees of each vertex
$b_i$ equal $n$ and $n+k$, respectively.  Since each of the $k$ vertices in
$A'$ is an in-neighbour of $a_i$ and an out-neighbour of $b_i$, it holds that 
$d_{G'-A'}^+(a_i)=d_{G'-A'}^-(a_i)=d_{G'-A'}^+(b_i)=d_{G'-A'}^-(b_i)=n$ for
each $i\in \{1,\ldots,2k+1\}$.  All other vertices in the gadgets, already
balanced in $G'$, remain balanced in $G'-A'$. The same holds for the vertices
in $\{v_1,\ldots,v_k\}\setminus A'$; the in- and out-degree of each of these
vertices, both in $G'$ and in $G'-A'$, equals $2k+1$. For every vertex $v'\in
V'\setminus A'$, it holds that $d_{G'-A'}^+(v')=d_{G-A}^+(v) + 2k+1$ and
$d_{G'-A'}^-(v')=d_{G-A}^-(v) + 2k+1$. Since $d_{G-A}^+(v)=d_{G-A}^-(v)$ for
every $v\in V(G)\setminus A$ due to the assumption that $G-A$ balanced, it
holds that every $v'\in V'\setminus A'$ is balanced in $G'-A'$. We conclude
that $G'-A'$ is Eulerian.

For the reverse direction, suppose there exists a sequence $L$ of operations
from $S$ that transforms~$G'$ into a Eulerian digraph. We first argue that $L$
deletes exactly $k$ vertices from $V'\cup \{v_1,\ldots,v_k\}$. 
As we mentioned before, the in- and out-degrees of each vertex $a_i$ in $G'$
equal $n+k$ and $n$ in~$G'$, respectively, while the in- and out-degrees of
each vertex~$b_i$ in $G'$ equal $n$ and $n+k$, respectively.  Since $k>0$ by
assumption, this means that the operations in $L$ need to either delete or
balance each of the $4k+2$ vertices in the set
$Z=\{a_1,\ldots,a_{2k+1},b_1,\ldots,b_{2k+1}\}$. Since $|L|=k$ and each edge
deletion or edge addition changes the degree of at most two vertices in $Z$,
there is a gadget $G_j$ such that $L$ neither deletes a vertex of $G_j$ nor
adds or deletes an edge incident with any of the vertices of~$G_j$. The fact
that the vertices of~$G_j$, and $a_j$ and $b_j$ in particular, are balanced
after applying the operations in $L$ implies that $L$ deletes exactly $k$
in-neighbours of~$a_j$ (all of which are out-neighbours of $b_j$). We conclude
that~$L$ deletes exactly $k$ vertices from $V'\cup \{v_1,\ldots,v_k\}$.

Let $A'\subseteq V'$ be the set of at most $k$ vertices that are deleted
from~$V'$ by~$L$, and let $A=\{v\in V(G) \mid v'\in A'\}$ be the corresponding
set of vertices in $G$. Let $v\in V(G)\setminus A$. From the construction of
$G'$, it holds that 
$d_{G-A}^+(v)=d_{G'-A'}^+(v) -(2k+1)$ and
$d_{G-A}^-(v)=d_{G'-A'}^-(v')-(2k+1)$.  Since
$d_{G'-A'}^+(v')=d_{G'-A'}^-(v')$, we have that $d_{G-A}^+(v)=d_{G-A}^-(v)$.
This shows that $G-A$ is balanced, and hence $(G,k)$ is a yes-instance of {\sc
Directed Balanced Node Deletion}.
\end{proof}

\section{Conclusions}\label{sec:concl}

By extending previous work~\cite{BoeschST77,CaiY11,CyganMPPS14} we completely
classified both the classical and parameterized complexity of \cdpe($S$) and
\cdbe($S$), as summarized in Table~\ref{t-thetable}.  Our work followed the
framework used~\cite{Golovach13,MathiesonS12} for  {\sc (Connected) Degree
Constraint Editing($S$)}. 
Our study was motivated by Eulerian graphs. As such,
the variants \dpe($S$) and \dbe($S$) of \cdpe($S$) and \cdbe($S$),
respectively, in which the graph $H$ is no longer required to be connected,
were beyond the scope of this paper.  It follows from results of Cai and
Yang~\cite{CaiY11} and Cygan~\cite{CyganMPPS14}, respectively, that for
$S=\nobreak \{\vd\}$, \dpe($S$) and  \dbe($S$) are \NP-complete and, when parameterized
by $k$, \W-hard, whereas they are polynomial-time solvable for $S=\nobreak \{\ed\}$
as a result of Lemmas~\ref{lem:t-join} and~\ref{lem:dir-t-join}, respectively.
The problems \dpe$(S)$ and \dbe$(S)$ are also polynomial-time solvable if
$\{\ea\}\subseteq S \subseteq \{\ea,\ed\}$; this is in fact proven by combining
Lemmas~\ref{lem:t-join} and~\ref{lem:struct-undir} for the undirected case, and
Lemmas~\ref{lem:dir-t-join} and~\ref{l-onetoone} for the directed case.  We
expect the remaining (hardness) results of Table~\ref{t-thetable} to carry over
as well.

Let $\ell$ be an integer. Here is a natural generalization of \cdpe($S$).

\begin{center}
\begin{boxedminipage}{.99\textwidth}
\begin{tabular}{rl}
\textsc{$\ell$-CDME($S$):} & \textsc{Connected Degree Modulo-$\ell$-Editing$(S)$}\\
\textit{~~~~Instance:} & A graph $G$, integer $k$ and\\ & 
                        a function
                        $\delta\colon V(G)\rightarrow\{0,\ldots,\ell-1\}$.\\
\textit{Question:} & Can $G$ be $(S,k)$-modified into a connected graph $H$\\& with 
         $d_{H}(v)\equiv\delta(v)~(\bmod~\ell)$ for each $v\in V(H)$?
                   \end{tabular}
\end{boxedminipage}
\end{center}
Note that $2$-\textsc{CDME}($S$) is \cdpe($S$).  
The following theorem shows
that the complexity of $3$-\textsc{CDME}($S$) may differ from
$2$-\textsc{CDME}($S$).

\begin{theorem}\label{thm:mod-3}
$3$-\textsc{CDME}$(\{\ea,\ed\})$ is \NP-complete even if $\delta\equiv 2$.
\end{theorem}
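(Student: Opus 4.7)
The plan is to prove \NP-hardness via a polynomial reduction from \textsc{Hamiltonian Cycle in Cubic Graphs}, which is classically \NP-complete (Garey, Johnson, Tarjan). Membership in \NP{} is immediate, as one can verify in polynomial time that a claimed pair $(A,D)$ with $|A|+|D|\leq k$ yields a connected graph $H=G+A-D$ in which $d_H(v)\equiv 2\pmod 3$ for every vertex~$v$.

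Given a cubic graph $G$ on $n$ vertices (so $n$ is even, since $3n=2|E(G)|$), I would construct the instance $(G,k,\delta)$ of $3$-\textsc{CDME}$(\{\ea,\ed\})$ by setting $k=n/2$ and $\delta\equiv 2$. For the forward direction, if $G$ has a Hamiltonian cycle $C$, then $|E(G)\setminus E(C)|=3n/2-n=n/2$, and deleting these edges leaves the connected $2$-regular graph $C$. Since $2\equiv 2\pmod 3$, this is a valid solution of size $n/2=k$.

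The main step is the backward direction. Suppose $(A,D)$ is a solution, let $H=G+A-D$, and set $\Delta(v)=d_H(v)-3$ for every $v\in V(G)$. Because $d_H(v)\equiv 2\pmod 3$ and $d_G(v)=3$, we have $\Delta(v)\equiv -1\pmod 3$, and therefore $|\Delta(v)|\geq 1$ for every~$v$, giving $\sum_v|\Delta(v)|\geq n$. On the other hand, every added or deleted edge contributes $\pm 1$ to exactly two of the $\Delta$-values, so $\sum_v|\Delta(v)|\leq 2(|A|+|D|)\leq 2k=n$. Equality throughout forces (i)~$|\Delta(v)|=1$ for every $v$ and (ii)~no vertex is incident to both an added and a deleted edge. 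Since $\Delta(v)=+1$ would give $d_H(v)=4\not\equiv 2\pmod 3$, we must in fact have $\Delta(v)=-1$ for every $v$, which forces $A=\emptyset$ and $D$ to be a perfect matching of $G$. Then $H=G-D$ is $2$-regular, and, by hypothesis, connected, so $H$ must be a single cycle visiting every vertex, that is, a Hamiltonian cycle of $G$.

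The delicate point I expect to be the main obstacle is the tight counting argument in the backward direction: the equality case of $\sum_v|\Delta(v)|\leq 2(|A|+|D|)$ simultaneously rules out any edge addition, forces all deletions to be vertex-disjoint, and pins $D$ down to a perfect matching; this is exactly what collapses the flexible editing problem onto the rigid structure of a Hamiltonian cycle, and it is specific to the choice $\delta\equiv 2$ (i.e.\ $\Delta(v)\equiv -1\pmod 3$), which is why the budget $k=n/2$ is so sharply tuned.
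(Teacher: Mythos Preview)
Your proof is correct and follows exactly the reduction the paper uses: Hamiltonian Cycle on cubic graphs, with $\delta\equiv 2$ and $k=|E(G)|-|V(G)|=n/2$. The paper merely asserts the equivalence without argument, so your tight counting via $\sum_v|\Delta(v)|\leq 2(|A|+|D|)$ actually supplies the backward direction that the paper leaves to the reader; one small addition would be to note (as the paper does) that the source problem is taken on \emph{connected} cubic graphs, though your argument works without that restriction.
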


\begin{proof}
Reduce from the \textsc{Hamiltonicity} problem, which is \NP-complete for
connected cubic graphs~\cite{GareyJ79}.  Let $G$ be a connected cubic graph.
Let $\delta(v)=2$ for every $v\in V(G)$, and take $k=|E(G)|-|V(G)|$.  Then $G$
has a Hamiltonian cycle if and only if $G$ can be $(S,k)$-modified into a
connected graph $H$ with $d_H(v)=2~(\bmod~3)$ for all $v\in V(H)$.
\end{proof}

\begin{sloppypar}
It is natural to ask whether
$3$-\textsc{CDME}$(\{\ea,\ed\})$ is fixed-parameter tractable with parameter~$k$.
\end{sloppypar}

Finally, another direction for future research is to investigate how the complexity of  \cdpe($S$) and \cdbe($S$) changes
if we permit other graph operations, such as edge contraction, to be in the set~$S$.
For instance, Belmonte et al.~\cite{BGHP14} considered this operation and
obtained the first results extending the work of Mathieson and Szeider~\cite{MathiesonS12} in this direction.

\bibliography{editing-euler}
\end{document}